\newtheorem{fact}{Fact}
\newtheorem{definition}{Definition}
\newtheorem{lemma}{Lemma}
\newtheorem{theorem}{Theorem}
\newenvironment{proof}{\paragraph{Proof:}}{\hfill$\square$}
\title{A Generalized Shuffle Framework for Privacy Amplification: Strengthening Privacy Guarantees and Enhancing Utility}
\author {
    E Chen\textsuperscript{\rm 1},
    Yang Cao\textsuperscript{\rm 2, \thanks{Corresponding author: Yang Cao, yang@ist.hokudai.ac.jp }},
    Yifei Ge \textsuperscript{\rm 3}
}
\begin{document}
\maketitle

\begin{abstract}
The shuffle model of local differential privacy is an advanced method of privacy amplification designed to enhance privacy protection with high utility. 
It achieves this by randomly shuffling sensitive data, making linking individual data points to specific individuals more challenging.
However, most existing studies have focused on the shuffle model based on
$(\epsilon_0,0)$-Locally Differentially Private (LDP) randomizers, with limited consideration for complex scenarios such as $(\epsilon_0,\delta_0)$-LDP or personalized LDP (PLDP). 
This hinders a comprehensive understanding of the shuffle model's potential and limits its application in various settings.
To bridge this research gap, we propose a generalized shuffle framework that can be applied to any
$(\epsilon_i,\delta_i)$-PLDP setting with personalized privacy parameters. 
This generalization allows for a broader exploration of the privacy-utility trade-off and facilitates the design of privacy-preserving analyses in diverse contexts.
We prove that shuffled $(\epsilon_i,\delta_i)$-PLDP process approximately preserves $\mu$-Gaussian Differential Privacy with $
\mu = \sqrt{\frac{2}{\sum_{i=1}^{n} \frac{1-\delta_i}{1+e^{\epsilon_i}}-\max_{i}{\frac{1-\delta_{i}}{1+e^{\epsilon_{i}}}}}}.
$
This approach allows us to avoid the limitations and potential inaccuracies associated with inequality estimations.
To strengthen the privacy guarantee, we improve the lower bound by utilizing \textit{hypothesis testing} instead of relying on rough estimations like the Chernoff bound or Hoeffding's inequality.
Furthermore, extensive comparative evaluations clearly show that our approach outperforms existing methods in achieving strong central privacy guarantees while preserving the utility of the global model.
We have also carefully designed corresponding algorithms for average function, frequency estimation, and stochastic gradient descent.
\end{abstract}

\section*{Introduction}
The shuffle model \cite{Bittau2017prochlo} is a state-of-the-art technique to balance privacy and utility for differentially private data analysis.
In traditional differential privacy, a trusted server (or aggregator) is often assumed to collect all users' data before privacy-preserving data analysis \cite{Dwork2014algorithmic}. 
However, such approaches may not be feasible or practical in scenarios where a trusted curator does not exist.  
Given this, Local Differential Privacy (LDP) \cite{KS11} has been proposed to achieve differential privacy by allowing the users to add noises individually; however, LDP suffers from low utility due to the accumulated noise. 
To address this, the shuffle model of differential privacy (shuffle DP) \cite{Bittau2017prochlo, balle2019privacy, Erlingsson2019amplification} adds a shuffler between the users and the server to randomly shuffle the noisy data before sending the server.
The shuffle DP has an intriguing theoretical privacy amplification effect, which means a small amount of local noise could result in a strong privacy guarantee against the untrusted server.
Extensive studies \cite{balle2019privacy,Erlingsson2019amplification,Girgis2021renyi,Feldman2022hiding, liu2021flame,GDDSK21federated} have been devoted to proving a better (tighter) privacy amplification in the shuffle DP.

However, most existing studies have focused on the shuffle model based on $(\epsilon_0,\delta_0)$-LDP randomizer with uniform and limited settings of local privacy parameters $\epsilon_0$ and $\delta_0$.
For example,  \citealt{Erlingsson2019amplification} assumes $0<\epsilon_0<1/2$ and $\delta_0 = 0$.
Although a recent work \citealt{Liu2023} provides a privacy bound for local personalized privacy parameter $\epsilon_i$  for each user $i$ (and a fixed $\delta_0$), the bound is relatively rough and has a large room to be improved.
To address this problem, we make the following contributions.

Firstly, we propose a {\bf G}eneralized {\bf S}huffle framework for {\bf P}rivacy {\bf A}mplification (GSPA) to allow arbitrary local privacy parameters and provide new privacy amplification analysis.
Our analysis technique benefits from the adoption of Functional Differential Privacy \cite{Dong2022gaussian} and carefully analyzing the distance between two multinomial distributions (see Theorem \ref{thm:post3} and \ref{thm:GDPdistance}). 
For both uniform and personalized privacy parameter settings, we provide lower privacy bounds that exceed that of existing results (see Figure \ref{Fig:VsLiu}).
  
{Secondly, we apply GSPA with different personalized privacy parameter settings to diverse privacy-preserving analysis tasks, including private mean, private frequency estimation, and DP-SGD, to demonstrate the effectiveness of our approach. 
For mean and frequency estimation with GSPA (see Figure 3), the more conservative users there are, the less utility is observed, showing a negative linear relationship. Simultaneously, as the privacy parameters of conservative users increase, utility demonstrates a positive linear relationship.
For DP-SGD with GSPA (see Figure 4), there exists an interesting phenomenon that despite the constant scenario ($\epsilon_0=0.5$) offers a stronger privacy protection, its test accuracy (94.8\%) is higher than that (93.5\%) of scenarios $U(0.01,2)$, which have varying local privacy parameters.}
\section*{Preliminaries}
This section presents the definitions and tools necessary for understanding the shuffle model. These serve as fundamental tools for proposing our methods and form the basis of our approach.
\begin{definition}\label{ApprDP}({\bf Differential Privacy})
A randomized algorithm $\mathcal{R}$ 
satisfies $(\epsilon,\delta)$-differential privacy, denoted as $(\epsilon,\delta)$-DP, if for all $\mathcal{S} \subseteq$ Range$(\mathcal{R})$ and for all neighboring databases $D_0, D_1$ ($D_0$ can be obtained from $D_1$ by
replacing exactly one record):
\begin{equation}\label{epsilonDeltaDis}
\mathbb{P}( \mathcal{R}(D_0)\in \mathcal{S}) \leq e^\epsilon \mathbb{P}(\mathcal{R}(D_1) \in \mathcal{S})+\delta.
\end{equation}
\end{definition}
$\epsilon$ is known as the privacy budget, while
 $\delta$ is referred to as the indistinguishability parameter, which describes the probability of privacy leakage exceeding
$\epsilon$. Both $\epsilon$ and $\delta$ should be as small as possible, indicating stronger privacy protection.
\begin{definition}({\bf Local Differential Privacy})
A randomized algorithm $\mathcal{R}:\mathcal{D} \rightarrow \mathcal{S}$ satisfies $(\epsilon,\delta)$-local differential privacy, denoted as $(\epsilon,\delta)$-LDP, if for all pairs $x,x' \in \mathcal{D}$, $\mathcal{R}(x)$ and $\mathcal{R}(x')$ satisfies
\begin{equation}\label{epsilonDeltaDis}
\mathbb{P}( \mathcal{R}(x)\in \mathcal{S}) \leq e^\epsilon \mathbb{P}(\mathcal{R}(x') \in \mathcal{S})+\delta.
\end{equation}
\end{definition}
In Local Differential Privacy (LDP), each data contributor applies a local randomization mechanism to perturb their own data before sharing it with a central aggregator.

\begin{figure}
  \centering
  \includegraphics[scale = 0.35]{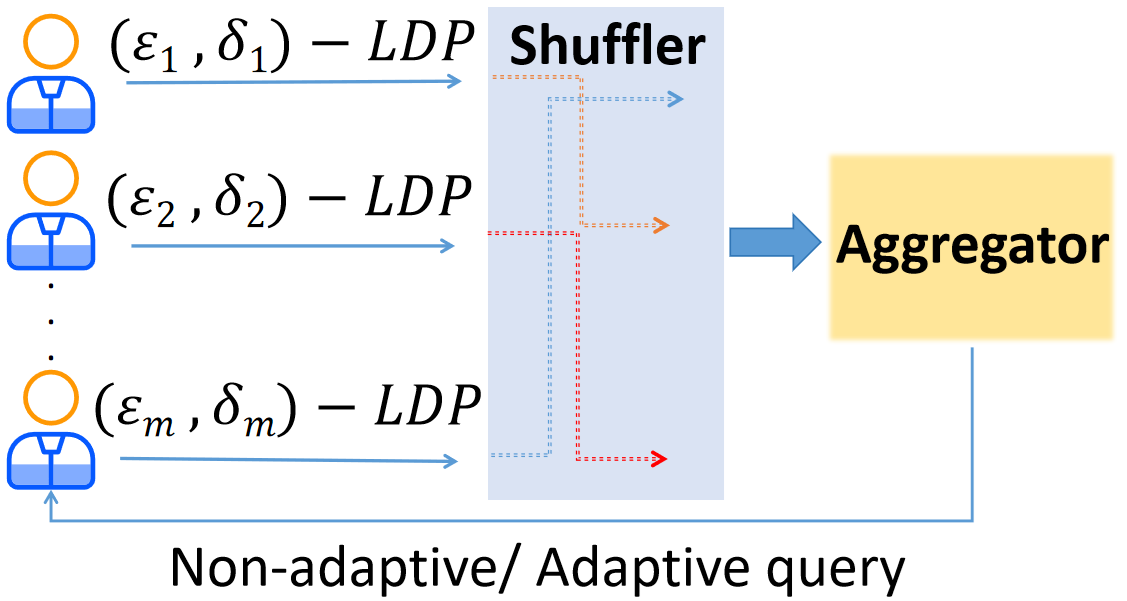}\\
  ~\\
  \caption{The {\bf G}eneralized {\bf S}huffle framework for {\bf P}rivacy {\bf A}mplification (GSPA).  Privacy parameters $(\epsilon_i,\delta_i)$ and each client's output  are shuffled separately. 
  The random permutation of the shuffler is unknown to anyone except the shuffler itself. 
  The type of query, whether non-adaptive or adaptive, depends on whether the next query depends on the previous output.
  }\label{Fig:ShufflePlot}
\end{figure}

\subsection{Privacy Tools}
Differential privacy can be regarded as a hypothesis testing problem for a given distribution \cite{Kairouz2015composition}.
 In brief, we consider the hypothesis testing issue with two hypotheses.
\begin{align*}
&H_0: \text{The underlying dataset is } D_0, \\
&H_1: \text{The underlying dataset is } D_1.
\end{align*}     
To provide an intuitive explanation, we designate the name Bob to denote the exclusive individual present in $D_0$ but absent in $D_1$.
Consequently, rejecting the null hypothesis implies the recognition of Bob's nonexistence, whereas accepting the null hypothesis suggests observing Bob's existence in the dataset.

Inspired by this, an effective tool called $f$-DP \cite{Dong2022gaussian} has been introduced, which utilizes hypothesis testing to handle differential privacy.
For two neighbouring databases $D_0$ and $D_1$, let $U$ and $V$ denote the probability distributions of $\mathcal{R}(D_0)$ and $\mathcal{R}(D_1)$, respectively. We consider a rejection rule $0 \leq \phi \leq 1$, with type I and type II error rates defined as
\begin{equation}\label{TypeError}
\alpha_\phi = \mathbb{E}_U[\phi], \quad \beta_{\phi} = 1-\mathbb{E}_V[\phi].
\end{equation}
It is well-known that
\begin{equation}\label{Tv2beta}
    \alpha_\phi+\beta_\phi \geq 1 - TV(U,V),
\end{equation}
where $TV(U,V)$ is the supremum of $|U(A)-V(A)|$ over all measurable sets $A$. To characterize the fine-grained trade-off between the two errors, Table 1 helps to establish a clear understanding of the relationship between the two errors.

\begin{table}
\setlength{\tabcolsep}{3pt} 
\centering
\begin{tabular}{|c|c|c|}
\hline
 & Actual True & Actual False \\
\hline
Accept Hypothesis & Correct  & Type II Error ($\beta$) \\
\hline
Reject Hypothesis & Type I Error ($\alpha$) & Correct  \\
\hline
\end{tabular}
\label{Tab:TwoErrors}
\caption{Table of Error Types}
\end{table}

For any two probability distributions $U$ and $V$ on the same space $\Omega$, the trade-off function $T(U,V):[0,1]\rightarrow [0,1]$ is defined as
\begin{equation}
T(U,V)(\alpha) = \inf\{\beta_\phi: \alpha_\phi \leq \alpha \},
\end{equation}
where the infimum is taken over all measurable rejection rules $\phi$, and $\alpha_{\phi}=\mathbb{E}_U(\phi)$ and $\beta_{\phi}=1-\mathbb{E}_V(\phi)$.

\begin{definition} ({\bf Functional Differential Privacy}, $f$-DP)
Let $f$ be a trade-off function, a mechanism $\mathcal{R}$ is said to be $f$-DP if
\begin{equation}
T(\mathcal{R}(D_0),\mathcal{R}(D_1)) \ge f,
\end{equation}
for all neighboring data sets $D_0$ and $D_1$.
\end{definition}
To enhance readability, we have included the introduction and relevant properties of
$f$-DP in the section of Appendix. It is worth noting that traditional DP belongs to a special case of $f$-DP, therefore $f$-DP has a wider scope of applicability.

In addition, Laplace mechanism and Gaussian mechanism  are two common approaches used in differential privacy \cite{Dwork2014algorithmic}. The choice between the Laplace mechanism and the Gaussian mechanism depends on the data type, privacy requirements, and query tasks. The Laplace mechanism provides stronger privacy but may introduce larger errors, while the Gaussian mechanism is more suitable for accurate results. Thus, it's important to strike a balance between privacy and accuracy based on specific requirements.
\begin{definition}[Laplace Mechanism]
Given a query function $Q: D \rightarrow \mathbb{R}^d$, privacy parameter $\epsilon$ and $\ell_1$ sensitivity $\Delta (Q) = \max \| Q(D) - Q(D')\|_1$, then for any two neighbouring datasets $D,D'$,
the Laplace mechanism
\begin{equation}
M(D, Q) = Q(D) + \text{Lap}\left(\frac{\Delta (Q)}{\epsilon}\right)
\end{equation}
preserves $\epsilon$-DP, where \text{Lap}($\lambda$) denotes the centralized Laplace noise with scale parameter $\lambda$.
\end{definition}
In the absence of ambiguity, we express both queries and answers as $Q(\cdot)$ and $A(\cdot)$ respectively.
\begin{definition}[Gaussian Mechanism]
 Given a query function $Q: D \rightarrow \mathbb{R}^d$, privacy parameter $\epsilon$ and $\ell_2$ sensitivity $\Delta_2 (Q) = \max \|Q(D)-Q(D')\|_2$, then for any two neighbouring datasets $D,D'$, the Gaussian mechanism
\begin{equation}
M(D, Q) = Q(D) + N\left(0, \frac{2\log(1.25/\delta)\Delta_2^2(Q)}{\epsilon^2}\right)
\end{equation}
preserves $(\epsilon,\delta)$-DP,
where $N(\mu, \sigma^2)$ denotes the Gaussian noise with mean $\mu$ and variance $\sigma^2$.
\end{definition}

\section*{Privacy Analysis of GSPA Framework}
Our {\bf G}eneralized {\bf S}huffle framework for {\bf P}rivacy {\bf A}mplification (GSPA)
consists of three main components: local randomizers, a trustworthy shuffler, and an aggregator, which are the same as existing shuffle DP frameworks; however, GSPA allows local randomizers with arbitrary privacy parameters.
(i) For $n$ users, the local randomizer $M_i$ adds noise to the original data $x_i$ on the $i$-th user's devices, thus providing $(\epsilon^\ell_i,\delta^\ell_i)$-PLDP for user $i$. (ii) The shuffler randomly permutes the order of data elements, ensuring that the resulting arrangement is unknown to any party other than the shuffler itself. (iii) The aggregator collects and integrates shuffled data for simple queries, while for complex tasks like machine learning, it trains models based on shuffled data with multiple iterations.
Without causing confusion, the notation $(\epsilon_0, \delta_0)$-LDP is used to represent the uniform scenario, while $(\epsilon_i, \delta_i)$-PLDP denotes the personalized scenario.
\subsubsection*{Privacy Amplification Effect}
In this section, we address the issue of privacy protection in the context of a general shuffled adaptive process for personalized local randomizers.
\begin{definition}
For a domain $\mathcal{D}$, let $\mathcal{R}^{(i)}:\mathcal{S}^{(1)}\times \mathcal{S}^{(2)} \times \cdots \times \mathcal{S}^{(i-1)} \times \mathcal{D} \rightarrow \mathcal{S}^{(i)}$ for $i$ $\in [n]$, where $\mathcal{S}^{(i)}$ is the range space of $\mathcal{R}^{(i)}$ be a sequence of algorithms such that $\mathcal{R}^{(i)}(z_{1:i-1},\cdot)$ is an $(\epsilon_i,\delta_i)$-PLDP randomizer for all values of auxiliary inputs $z_{1:i-1} \in \mathcal{S}^{(1)} \times \mathcal{S}^{(2)} \times \cdots \mathcal{S}^{(i-1)}$. Let $\mathcal{A}_R: \mathcal{D} \rightarrow \mathcal{S}^{(1)} \times \mathcal{S}^{(2)} \times \cdots \times \mathcal{S}^{(n)}$ be the algorithm that given a dataset $x_{1:n} \in \mathcal{D}^n$, then sequentially computes $z_i = \mathcal{R}^{(i)}(z_{1:i-1},x_i)$ for $i \in [n]$ and outputs $z_{1:n}$. We say $\mathcal{A}_R(\mathcal{D})$ is a personalized LDP (PLDP) adaptive process. Similarly, if we first sample a permutation $\pi$ uniformly at random, then sequentially computes $z_i = \mathcal{R}^{(i)}(z_{1:i-1},x_{\pi_i})$ for $i \in [n]$ and outputs $z_{1:n}$, we say this process is shuffled PLDP adaptive and denote it by $\mathcal{A}_{R,S}(\mathcal{D})$.
\end{definition}

\begin{lemma}\label{R2Q}
Given an $(\epsilon_i,\delta_i)$-PLDP adaptive process, then in the $i$-th step, local randomizer  $\mathcal{R}^{(i)}$: $\mathcal{D} \rightarrow \mathcal{S}$ and for any $n+1$ inputs $x_1^0, x_1^1, x_2, \cdots, x_n \in \mathcal{D}$, there exists distributions
$\mathcal{Q}_1^0, \mathcal{Q}_1^1, \mathcal{Q}_1, \mathcal{Q}_2, \cdots, \mathcal{Q}_n$ such that
\begin{equation}
\mathcal{R}^{(i)}(x_1^0) = \frac{(1-\delta_i)e^{\epsilon_i}}{1+e^{\epsilon_i}}\mathcal{Q}_1^0+\frac{1-\delta_i}{1+e^{\epsilon_i}}\mathcal{Q}_1^1 + \delta_i \mathcal{Q}_1,
\end{equation}

\begin{equation}
\mathcal{R}^{(i)}(x_1^1) = \frac{(1-\delta_i)}{1+e^{\epsilon_i}}\mathcal{Q}_1^0+\frac{(1-\delta_i)e^{\epsilon_i}}{1+e^{\epsilon_i}}\mathcal{Q}_1^1 + \delta_i \mathcal{Q}_1.
\end{equation}
$\forall x_i \in \{x_2, \cdots, x_n\},$
\begin{equation}\label{DecomposXi}
 \mathcal{R}(x_i) = \frac{1-\delta_i}{1+e^{\epsilon_i}}\mathcal{Q}^0_1+\frac{1-\delta_i}{1+e^{\epsilon_i}}\mathcal{Q}^1_1 +\left(1-\frac{2(1-\delta_i)}{1+e^{\epsilon_i}}\right)\mathcal{Q}^i.
\end{equation}

\end{lemma}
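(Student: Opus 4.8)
The plan is to prove all three identities at the level of densities. Write $\mathcal{R}=\mathcal{R}^{(i)}$ for the $(\epsilon_i,\delta_i)$-PLDP randomizer of the step in question, fix a $\sigma$-finite measure $\lambda$ dominating $\mathcal{R}(x_1^0)$, $\mathcal{R}(x_1^1)$ and every $\mathcal{R}(x_j)$ (e.g.\ their sum), and let $p_0,p_1,p_2,\dots,p_n$ be the corresponding $\lambda$-densities. Since in each of the three displays the mixing weights already sum to $1$, it suffices to exhibit nonnegative $\lambda$-densities $q_1^0,q_1^1,q_1,q_2,\dots,q_n$, each of total mass $1$, satisfying the three linear identities $\lambda$-a.e. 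I would first pin down $q_1^0,q_1^1,q_1$ from the two ``$x_1$'' identities and then read off each $q_j$ as a renormalized remainder of $p_j$.

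For the first two identities, subtracting them forces $q_1^0-q_1^1=\frac{1+e^{\epsilon_i}}{(1-\delta_i)(e^{\epsilon_i}-1)}(p_0-p_1)$ and adding them forces $(1-\delta_i)(q_1^0+q_1^1)+2\delta_i q_1=p_0+p_1$. So the construction reduces to choosing a nonnegative $g:=q_1^0+q_1^1$ with $\int g\,d\lambda=2$ lying, $\lambda$-a.e., between $\frac{1+e^{\epsilon_i}}{(1-\delta_i)(e^{\epsilon_i}-1)}|p_0-p_1|$ and $\frac{p_0+p_1}{1-\delta_i}$; then $q_1^0,q_1^1$ come from the symmetric split, one sets $q_1:=(2\delta_i)^{-1}\big(p_0+p_1-(1-\delta_i)g\big)$, and a one-line substitution using $e^{\epsilon_i}q_1^0+q_1^1=\tfrac12(e^{\epsilon_i}+1)g+\tfrac12(e^{\epsilon_i}-1)(q_1^0-q_1^1)$ verifies that the weighted combinations reproduce $p_0$ and $p_1$. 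The two bounds on $g$ are pointwise-compatible exactly when $\max(p_0,p_1)\le e^{\epsilon_i}\min(p_0,p_1)$ holds $\lambda$-a.e.; this is where the $(\epsilon_i,\delta_i)$-PLDP guarantee on the pair $(x_1^0,x_1^1)$ enters — one strips off the $\delta_i$-mass on which the pure ratio is violated and absorbs it into $q_1$ (the standard ``$(\epsilon_i,\delta_i)$-pair $=$ pure-$\epsilon_i$ core plus $\delta_i$-junk'' splitting, which I would cite from the $f$-DP material in the appendix) and then runs the split on the core.

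For the remaining users I would set $q_j:=c_i^{-1}\big(p_j-\tfrac{1-\delta_i}{1+e^{\epsilon_i}}(q_1^0+q_1^1)\big)$ with $c_i=1-\tfrac{2(1-\delta_i)}{1+e^{\epsilon_i}}$, which has unit mass automatically and yields the third identity as soon as $p_j\ge\tfrac{1-\delta_i}{1+e^{\epsilon_i}}(q_1^0+q_1^1)$ holds $\lambda$-a.e. Establishing this pointwise domination — and, what I expect to be the real obstacle, making it hold for all $j$ simultaneously with a single $g$ of mass $2$ — is the crux. The available handles are the PLDP relations among $\mathcal{R}(x_1^0),\mathcal{R}(x_1^1),\mathcal{R}(x_j)$: after the $\delta_i$-stripping, combining the core relation $p_1\ge e^{-\epsilon_i}p_0$ (resp.\ $p_0\ge e^{-\epsilon_i}p_1$) with $p_0\le e^{\epsilon_i}p_j$ (resp.\ $p_1\le e^{\epsilon_i}p_j$) gives $|p_0-p_1|\le(e^{\epsilon_i}-1)p_j$ on the core, so the lower envelope of $g$ is already dominated by $\tfrac{1+e^{\epsilon_i}}{1-\delta_i}p_j$ pointwise; the remaining freedom in where the mass of $g$ (equivalently, of $\mathcal{Q}_1$) is placed then has to be spent to meet $\int g=2$ while keeping $g$ below $\tfrac{1+e^{\epsilon_i}}{1-\delta_i}\min_j p_j$ everywhere — equivalently, so that the argument is run against the extremal (canonical) $(\epsilon_i,\delta_i)$-randomizer, which is all the downstream amplification bound needs. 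Once $g$ is chosen and the domination secured, reading off the $q_j$'s and collecting the three displays completes the proof.
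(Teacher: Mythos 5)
Your reduction of the first two identities to choosing a single function $g=q_1^0+q_1^1$ is fine, but the step you yourself flag as the crux --- finding one $g$ of mass $2$ that keeps every remainder $q_j$ nonnegative simultaneously --- is not merely unfinished: it cannot be carried out at the level of the actual output densities, because in general there is no ``remaining freedom'' left after the first two identities. Take $\delta_i=0$ and a randomizer on two outputs $\{a,b\}$ with $\mathcal{R}(x_1^0)=\mathcal{R}(x_1^1)=\left(\tfrac{e^{\epsilon_i}}{1+e^{\epsilon_i}},\tfrac{1}{1+e^{\epsilon_i}}\right)$ and $\mathcal{R}(x_2)=\left(\tfrac{1}{1+e^{\epsilon_i}},\tfrac{e^{\epsilon_i}}{1+e^{\epsilon_i}}\right)$ (randomized response in which $x_1^0$ and $x_1^1$ happen to encode to the same bit); this is $\epsilon_i$-LDP. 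Subtracting the first two identities forces $q_1^0=q_1^1$, and then each of them forces $q_1^0=q_1^1=p_0$, so the third identity would require $p_2\ge\tfrac{2}{1+e^{\epsilon_i}}\,p_0$ pointwise, i.e.\ $\tfrac{1}{1+e^{\epsilon_i}}\ge\tfrac{2e^{\epsilon_i}}{(1+e^{\epsilon_i})^2}$ at the point $a$, which fails for every $\epsilon_i>0$: LDP only guarantees $p_2\ge e^{-\epsilon_i}p_0$, and $e^{-\epsilon_i}<\tfrac{2}{1+e^{\epsilon_i}}$. So your ``real obstacle'' is a genuine obstruction to the density-level plan, not a bookkeeping issue about where to place the mass of $\mathcal{Q}_1$.

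The paper's proof avoids exactly this by never trying to realize the decomposition on the raw output space. It first invokes Lemma \ref{DPtransform} (KOV15) to replace $\mathcal{R}^{(i)}$ on the distinguished pair $(x_1^0,x_1^1)$ by the canonical quaternary randomizer $\mathcal{R}'$ on $\{A,0,1,B\}$ together with a post-processing map $proc$ satisfying $proc(\mathcal{R}'(x))=\mathcal{R}^{(i)}(x)$; it then defines $\mathcal{W}_1^0,\mathcal{W}_1^1,\mathcal{W}_1$ as conditional laws of $\mathcal{R}'$ on the level sets where the extremal probabilities $\tfrac{(1-\delta_i)e^{\epsilon_i}}{1+e^{\epsilon_i}}$ and $\tfrac{1-\delta_i}{1+e^{\epsilon_i}}$ are attained, writes the three decompositions for $\mathcal{R}'$ (the one for the other users only as a mixture domination), and finally sets $\mathcal{Q}_1^0=proc(\mathcal{W}_1^0)$, etc., pushing the identities through $proc$ by linearity. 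In other words, the exact decomposition lives on the extremal randomizer and is transferred to $\mathcal{R}^{(i)}$ only via post-processing, which is all the downstream reduction in Theorem \ref{thm:post3} needs. The closing clause of your proposal (``run the argument against the extremal canonical randomizer'') is precisely this missing reduction; without stating and using it --- in particular without the post-processing transfer --- your construction does not prove the lemma, and as the example above shows, no choice of $g$ can repair it.
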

\begin{proof}
For inputs $X_0 = \{x_1^0, x_2, \ldots, x_n\}$ and $X_1 = \{x^1_1, x_2, \ldots, x_n\}$, $\mathcal{R}^{(i)}$ satisfies the constraints of Lemma \ref{DPtransform},
so there exists an
$(\epsilon_i,\delta_i)$-PLDP local randomizer $\mathcal{R'}: \mathcal{D} \rightarrow \mathcal{Z}$ for the $i$-th output and post-processing function $proc(\cdot)$ such that $proc(\mathcal{R'}^{(i)}(x)) = \mathcal{R}^{(i)}(x)$, and
$$
P(\mathcal{R'}^{(i)}(x_1^0)=z) = \left\{
    \begin{array}{ll}
        0 & \mbox{if } z=A, \\
        \frac{(1-\delta_i)e^{\epsilon_i}}{1+e^{\epsilon_i}} & \mbox{if } z = 0, \\
        \frac{1-\delta_i}{1+e^{\epsilon_i}} & \mbox{if } z = 1, \\
        \delta_i & \mbox{if } z = B.
    \end{array}
\right.
$$

$$
P(\mathcal{R'}^{(i)}(x_1^1)=z) = \left\{
    \begin{array}{ll}
        \delta_i & \mbox{if } z=A, \\
        \frac{1-\delta_i}{1+e^{\epsilon_i}} & \mbox{if } z = 0, \\
        \frac{(1-\delta_i)e^{\epsilon_i}}{1+e^{\epsilon_i}} & \mbox{if } z = 1, \\
        0 & \mbox{if } z = B.
    \end{array}
\right.
$$
Let $L = \{z\in \mathcal{Z}|\mathbb{P}(\mathcal{R'}(x_1^0=z))=\frac{(1-\delta_i)e^{\epsilon_i}}{1+e^{\epsilon_i}}$ and $\mathbb{P}(\mathcal{R'}(x_1^1=z))=\frac{1-\delta_i}{1+e^{\epsilon_i}}\}$ ,
$U = \{z\in \mathcal{Z}|\mathbb{P}(\mathcal{R'}(x_1^1=z))=\frac{1-\delta_i}{1+e^{\epsilon_i}}$ and
$\mathbb{P}(\mathcal{R'}(x_1^1=z))=\frac{(1-\delta_i)e^{\epsilon_i}}{1+e^{\epsilon_i}}\}$. Let $M = \mathcal{Z}/\{L\bigcup U\}$ and
$p = \sum_{z \in \mathcal{L}} p_z = \sum_{z \in \mathcal{U}} p_z.$ Since conditioned on the output lying in $\mathcal{L}$, the distribution of
$\mathcal{R}'(x_1^0)$ and $\mathcal{R}'(x_1^1)$ are the same. Let $\mathcal{W}_1^0 = \mathcal{R}'(x_1^0) |L = \mathcal{R}'(x_1^1)|L$,
$\mathcal{W}^1_1 = \mathcal{R}'(x_1^0)|U = \mathcal{R}'(x_1^1)|U$ and $\mathcal{W}_1 = \mathcal{R}'(x_1^0)|M =\mathcal{R}'(x_1^1)|M.$ Then
$$
\mathcal{R}'(x_1^0) = \frac{(1-\delta_i)e^{\epsilon_i}}{1+e^{\epsilon_i}}\mathcal{W}_1^0+\frac{1-\delta_i}{1+e^{\epsilon_i}}\mathcal{W}_1^1 + \delta_i \mathcal{W}_1,
$$

$$
\mathcal{R}'(x_1^1) = \frac{1-\delta_i}{1+e^{\epsilon_i}}\mathcal{W}_1^0+\frac{(1-\delta_i)e^{\epsilon_i}}{1+e^{\epsilon_i}}\mathcal{W}_1^1 + \delta_i \mathcal{W}_1.
$$
Further, for all $x_i \in \{x_2,\cdots, x_n\}$,
$$\mathcal{R}'(x_i) \ge \frac{1-\delta_i}{1+e^{\epsilon_i}}\mathcal{W}_1^0 + \frac{1-\delta_i}{1+e^{\epsilon_i}}\mathcal{W}_1^1 + \left(1-\frac{2(1-\delta_i)}{1+e^{\epsilon_i}}\right)\mathcal{W}_i.$$
Letting $\mathcal{Q}_1^0 = proc(\mathcal{W}_1^0)$, $\mathcal{Q}_1^1 = proc(\mathcal{W}_1^1)$, $\mathcal{Q}_1 = proc(\mathcal{W}_1)$ and for
all $i \in \{2, \cdots, n\},$ $\mathcal{Q}_i = proc(\mathcal{W}_i)$. The proof is completed.
\end{proof}

\begin{theorem}\label{thm:post3}
For a domain $\mathcal{D}$, if $\mathcal{A}_{R,S}(\mathcal{D})$ is a shuffled PLDP adaptive process, then for arbitrary two neighboring datasets $D_0,D_1 \in \mathcal{D}^n$ distinct at the $n$-th data point, there exists a post-processing function $proc(\cdot)$: $(0,1,2) \rightarrow \mathcal{S}^{(1)} \times \mathcal{S}^{(2)} \times \cdots \times \mathcal{S}^{(n)},$
such that $$T(\mathcal{A}_{R,S}(D_0),\mathcal{A}_{R,S}(D_1)) = T(proc(\rho_0),proc(\rho_1)).$$
Here,
\begin{equation}\label{rho0}
\rho_0 = (\Delta_0,\Delta_1,\Delta_2) + \pmb V,
\end{equation}
\begin{equation}\label{rho1}
\rho_1 = (\Delta_1,\Delta_0,\Delta_2) + \pmb V,
\end{equation}
$\Delta_2 \sim Bern(\delta_n), \Delta_0 \sim Bin(1-\Delta_2,\frac{e^{\epsilon_{n}}}{1+e^{\epsilon_{n}}})$,
$\Delta_1 = 1 - \Delta_0 - \Delta_2$, where
$Bern(p)$ denotes a Bernoulli random variable with bias $p$, $Bin(n,p)$ denotes a Binomial distribution with $n$ trials and success probability $p$. In addition, $\pmb V = \sum_{i=1}^{n-1}MultiBern\left(\frac{1-\delta_{i}}{1+e^{\epsilon_i}}, \frac{1-\delta_i}{1+e^{\epsilon_i}}, 
1-\frac{2(1-\delta_i)}{1+e^{\epsilon_i}}
\right)$, where
 $MultiBern(\theta_1,\cdots, \theta_d)$ represents a $d$-dimensional Bernoulli distribution with $ \sum_{j=1}^d\theta_j = 1$.
\end{theorem}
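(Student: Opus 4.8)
The plan is to collapse the full shuffled transcript $z_{1:n}$ to a three‑category count vector that is a \emph{sufficient statistic} for the test ``$D_0$ versus $D_1$'', and then to identify the law of that statistic with $\rho_b$. Since $D_0$ and $D_1$ are assumed to differ only at the $n$-th data point, after the random permutation that point is routed to a uniformly random slot $J$, while every other slot carries the same input under both datasets; the entire $b$-dependence will therefore sit in a single coordinate, and the amplification will come precisely from marginalizing $J$ out.

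First I would run Lemma~\ref{R2Q} slot by slot. Conditioning on the revealed prefix $z_{1:i-1}$, the randomizer at slot $i$ is $(\epsilon_i,\delta_i)$-PLDP, so the lemma supplies component distributions $\mathcal Q_1^0,\mathcal Q_1^1$ and a residual $\mathcal Q_i$ such that feeding any non-distinguishing input yields the mixture with weights $\bigl(\tfrac{1-\delta_i}{1+e^{\epsilon_i}},\tfrac{1-\delta_i}{1+e^{\epsilon_i}},1-\tfrac{2(1-\delta_i)}{1+e^{\epsilon_i}}\bigr)$, while feeding $x_n^0$ (resp.\ $x_n^1$) yields the mixture with weights $\bigl(\tfrac{(1-\delta_i)e^{\epsilon_i}}{1+e^{\epsilon_i}},\tfrac{1-\delta_i}{1+e^{\epsilon_i}},\delta_i\bigr)$ (resp.\ with the first two coordinates swapped). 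Drawing first a type $t_i\in\{0,1,\perp\}$ from these weights and then an output from the matching component reproduces $\mathcal R^{(i)}$; the key feature is that the two ``clone'' components at slot $i$ are shared between the distinguishing input values and the non-distinguishing ones, so once $t_{1:n}$, the location $J$, and all prefixes are fixed the conditional law of $z_{1:n}$ is identical under $D_0$ and $D_1$.

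Next I would marginalize out $J$ and $t_{1:n}$ to read off the likelihood ratio. The $n-1$ non-distinguishing slots contribute a vector of types whose law is the same under $D_0$ and $D_1$, namely $\pmb V=\sum_{i=1}^{n-1}\mathrm{MultiBern}\bigl(\tfrac{1-\delta_i}{1+e^{\epsilon_i}},\tfrac{1-\delta_i}{1+e^{\epsilon_i}},1-\tfrac{2(1-\delta_i)}{1+e^{\epsilon_i}}\bigr)$; the distinguishing slot contributes a single one-hot vector that is $\perp$ with probability $\delta_n$ and otherwise ``$0$'' versus ``$1$'' with probabilities $(\tfrac{e^{\epsilon_n}}{1+e^{\epsilon_n}},\tfrac1{1+e^{\epsilon_n}})$ under $D_0$ and swapped under $D_1$ --- i.e.\ exactly the law of $(\Delta_0,\Delta_1,\Delta_2)$ versus $(\Delta_1,\Delta_0,\Delta_2)$, so their sum is $\rho_0$ versus $\rho_1$. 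I then expect to show that, conditioned on the count triple $(k_0,k_1,k_2)$, the posterior over $J$ and over the slot-to-type assignment recombines with the $b$-independent output kernel into a single channel $proc$ with $proc(\rho_b)\stackrel{d}{=}\mathcal A_{R,S}(D_b)$ for $b=0,1$, whence $T(\mathcal A_{R,S}(D_0),\mathcal A_{R,S}(D_1))=T(proc(\rho_0),proc(\rho_1))$. Collapsing a transcript to its three component-counts can only lose information, so the inequality $T(\mathcal A_{R,S}(D_0),\mathcal A_{R,S}(D_1))\le T(proc(\rho_0),proc(\rho_1))$ is free; the substance is that nothing is lost, i.e.\ $(k_0,k_1,k_2)$ is sufficient.

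The hard part will be the adaptivity. The components $\mathcal Q_1^0,\mathcal Q_1^1,\mathcal Q_i$ depend on the prefix $z_{1:i-1}$, so ``the count triple is sufficient'' cannot be asserted in one shot: it must be established by an induction over the slots that keeps the coupling between the resampled slot-to-type assignment and the adaptively revealed prefixes consistent, and one has to check that the $b$-dependent posterior over the distinguishing slot's location feeds into the reconstruction only through the part symmetric in $(k_0,k_1)$. Turning the easy one-sided bound into the claimed exact equality is precisely where the exchangeability of the first two coordinates of $\pmb V$, together with the fact that passing from $D_0$ to $D_1$ merely transposes the distinguishing user's first two type-probabilities, has to be used.
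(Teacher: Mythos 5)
Your proposal follows essentially the same route as the paper's proof: it invokes the clone decomposition of Lemma~\ref{R2Q} slot by slot, generates the three-way type variables whose shuffled counts have exactly the laws $\rho_0,\rho_1$, and argues that the actual adaptive transcript is recovered by a $b$-independent reconstruction kernel (your ``posterior over $J$ and the slot-to-type assignment'' is the paper's observation that only the assignment of users to type-$2$ slots, the set $K$, is needed and its law does not depend on $b$). The adaptivity issue you flag as the hard part is precisely what the paper handles via its sequential sampling of $z_t$ given $y_t$ and the hidden permutation, so your plan is consistent with the published argument.
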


In order to enhance readability, proof details are placed in the section of Appendix. Based on Theorem \ref{thm:post3}, we can simplify the original problem by analyzing the shuffling process in a simple non-adaptive protocol. 

The primary objective in the following is to demonstrate the distance between two distributions. The Berry Esseen lemma \cite{berry1941accuracy,Esseen1942} is highly valuable and essential for proving asymptotic properties.
\begin{lemma}[Berry Esseen]\label{BerryBound}
Let $P = (\xi_0,\xi_1,\xi_2) \sim  \sum_{i=1}^m MultiBern\left(\frac{p_i}{2},\frac{p_i}{2},1-p_i\right)$ and $Q \sim  N(\mu,\Sigma)$,
where $\mu = \mathbb{E}(P)$ and $\Sigma = Var(P).$ Then for the first two components $(X_0,X_1)$, there exists
$C>0$, such that $\|\tilde{P}-\tilde{Q}\|_{TV} \le \frac{C}{\sqrt{m}}$, where $\tilde{P}$ and $\tilde{Q}$ represent the distribution of $(\xi_0,\xi_1)$ and corresponding
normal distribution, respectively.
\end{lemma}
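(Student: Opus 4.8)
I would prove this as a two-dimensional \emph{local} central limit theorem with an Edgeworth-type remainder, using the characteristic-function (Fourier-inversion) method. The first step is dimension reduction: since $\xi_0+\xi_1+\xi_2=m$ almost surely, $(\xi_0,\xi_1)$ is the sum $S_m=\sum_{i=1}^m Y_i$ of independent $\mathbb{Z}^2$-valued vectors with $Y_i\in\{(1,0),(0,1),(0,0)\}$ carrying probabilities $\tfrac{p_i}{2},\tfrac{p_i}{2},1-p_i$. Its mean is $\mu=\big(\tfrac12\sum_i p_i,\ \tfrac12\sum_i p_i\big)$ and its covariance $\Sigma=\sum_i\Sigma_i$ has eigenvectors $(1,1)$ and $(1,-1)$ with eigenvalues $\tfrac12\sum_i p_i$ and $\tfrac12\sum_i p_i(1-p_i)$; thus $\Sigma$ is uniformly elliptic (eigenvalues $\Theta(m)$) exactly when the $p_i$ stay in a fixed compact subinterval of $(0,1)$, which is the regime in which the constant $C$ depends only on that interval — in general the rate is governed by $\sum_i p_i(1-p_i)$ rather than $m$. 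Because the increments generate all of $\mathbb{Z}^2$, $S_m$ is an aperiodic lattice law, so a total-variation comparison with the continuous Gaussian only makes sense once $N(\mu,\Sigma)$ is discretized onto $\mathbb{Z}^2$ (equivalently, $\tilde P$ is convolved with the uniform law on $[0,1)^2$); writing $\gamma(k)$ for the Gaussian mass of the unit cell at $k\in\mathbb{Z}^2$, the target is $\|\tilde P-\tilde Q\|_{TV}=\tfrac12\sum_{k\in\mathbb{Z}^2}\big|\mathbb{P}(S_m=k)-\gamma(k)\big|$.

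The core is the weighted pointwise bound $\big|\mathbb{P}(S_m=k)-\gamma(k)\big|\le \dfrac{C_1}{m^{3/2}}\big(1+|k-\mu|/\sqrt m\big)^{-4}$. By Fourier inversion on the torus, $\mathbb{P}(S_m=k)=(2\pi)^{-2}\int_{[-\pi,\pi]^2}e^{-i\langle t,k\rangle}\prod_{i=1}^m\varphi_i(t)\,dt$ with $\varphi_i(t)=1-p_i+\tfrac{p_i}{2}e^{it_1}+\tfrac{p_i}{2}e^{it_2}$, while up to an $O(m^{-2})$ cell-discretization error $\gamma(k)$ equals the density $\phi_{\mu,\Sigma}(k)$, itself the inverse transform over $[-\pi,\pi]^2$ of $e^{i\langle t,\mu\rangle-\frac12 t^\top\Sigma t}$ up to an $e^{-\Theta(m)}$ tail. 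I would split the integral at $|t|=\eta$: for $|t|\le\eta$ a Taylor expansion gives $\log\varphi_i(t)=i\langle t,\mu_i\rangle-\tfrac12 t^\top\Sigma_i t+O(p_i|t|^3)$, hence $\prod_i\varphi_i(t)=e^{i\langle t,\mu\rangle-\frac12 t^\top\Sigma t}\big(1+O(|t|^3\sum_i p_i)\big)$, and after the change of variables $t\mapsto t/\sqrt m$ this discrepancy integrates to $O(m^{-3/2})$ against the $O(m^{-1})$ leading term; for $|t|>\eta$, aperiodicity yields a uniform estimate $|\varphi_i(t)|\le 1-c\,p_i$, so $\big|\prod_i\varphi_i(t)\big|\le e^{-c\sum_i p_i}=e^{-\Theta(m)}$ is negligible. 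Integrating the smooth contribution by parts (transferring powers of $k-\mu$ onto $t$-derivatives of the Gaussian factor) produces the $\big(1+|k-\mu|/\sqrt m\big)^{-4}$ decay, as in the classical asymptotic-expansion arguments for lattice sums.

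Finally I would pass from the pointwise estimate to the TV bound. Fix $B=\{k:|k-\mu|\le R\sqrt{m\log m}\}$ for a large constant $R$: outside $B$ the Gaussian tail gives $\sum_{k\notin B}\gamma(k)=o(m^{-1})$, and a Bernstein (or Hoeffding) inequality for the bounded independent increments $Y_i$ gives $\mathbb{P}(S_m\notin B)=o(m^{-1})$, so those $k$ contribute negligibly. Inside, $\sum_{k\in B}\big|\mathbb{P}(S_m=k)-\gamma(k)\big|\le \tfrac{C_1}{m^{3/2}}\sum_{k\in\mathbb{Z}^2}\big(1+|k-\mu|/\sqrt m\big)^{-4}$, and the last sum is a Riemann-type sum comparable to $\int_{\mathbb{R}^2}(1+|x|/\sqrt m)^{-4}dx=\Theta(m)$; hence the whole quantity is $O(m^{-3/2})\cdot\Theta(m)=O(m^{-1/2})$. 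Collecting constants yields $\|\tilde P-\tilde Q\|_{TV}\le C/\sqrt m$.

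The main obstacle is the characteristic-function analysis for \emph{non-identically distributed}, possibly near-degenerate summands. One must (i) establish the off-origin decay $|\varphi_i(t)|\le 1-c\,p_i$ on $\eta\le|t|\le\pi\sqrt2$ with $c$ independent of $i$ — here one exploits that $\varphi_i$ is a true convex combination of $1,e^{it_1},e^{it_2}$, so $|\varphi_i(t)|=1$ forces $t_1\equiv t_2\equiv0\ (\mathrm{mod}\ 2\pi)$ — and (ii) control the moderate-deviation band $|k-\mu|\asymp\sqrt{m\log m}$ finely enough that summing over the $\Theta(m)$ bulk lattice points closes at $O(1/\sqrt m)$ rather than the naive $O(\log m/\sqrt m)$, which is precisely why the weighted pointwise bound, not a flat supremum bound, is needed. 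The remaining ingredients — dimension reduction, the moment computations for $\mu$ and $\Sigma$, and the concentration tail bound — are routine.
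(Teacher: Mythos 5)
Your plan is sound, but note that the paper never actually proves this lemma: it invokes the classical Berry--Esseen references and, for fixed $n$, suggests evaluating the discrepancy numerically by integrating the Gaussian density over cells. Your Fourier-analytic local CLT is therefore a genuinely different---and in fact more careful---route: the classical Berry--Esseen theorem controls Kolmogorov-type distances over half-planes or convex sets, whereas the total variation distance between the lattice-supported $\tilde P$ and the continuous $\tilde Q$ is identically $1$, so the statement only becomes meaningful after your step of discretizing $N(\mu,\Sigma)$ onto $\mathbb{Z}^2$ (equivalently, smoothing $\tilde P$ over unit cells), which is exactly what the paper's own numerical recipe with $h(k_0,k_1)$ implicitly does. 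Your weighted pointwise bound $O(m^{-3/2})\bigl(1+|k-\mu|/\sqrt m\bigr)^{-4}$ followed by summation over the $\Theta(m)$ bulk lattice points is the standard and correct way to upgrade a local CLT to the $O(1/\sqrt m)$ total-variation rate, and the two obstacles you flag (uniform off-origin decay $|\varphi_i(t)|\le 1-c\,p_i$ for non-identically distributed summands, and the moderate-deviation band) are the right ones. What your analysis buys over the paper's citation is also worth stating explicitly: the constant $C$ necessarily depends on the $p_i$ staying in a compact subinterval of $(0,1)$ (in the paper's application $p_i$ shrinks as $\epsilon_i$ grows, in which case the effective rate is governed by $\sum_i p_i(1-p_i)$ rather than $m$), and the paper's displayed numerical criterion is a supremum over cells, i.e.\ a local-CLT discrepancy, not the TV norm it is subsequently used as in the proof of Theorem 2. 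One trivial slip in your write-up: the eigenvalue $\tfrac12\sum_i p_i$ of $\Sigma$ belongs to the eigenvector $(1,-1)$ and $\tfrac12\sum_i p_i(1-p_i)$ to $(1,1)$, not the other way around; this has no bearing on your argument, since only uniform ellipticity is used.
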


In fact, for given $n$, we can obtain sophisticated bound of $\|\tilde{P}-\tilde{Q}\|_{TV}$ by numerical methods. Without loss of generality, we assume $\epsilon_i=\epsilon_0$, $\delta_i = \delta_0=O(1/n)$, then
$p_0 = \frac{1-\delta_0}{1+e^{\epsilon_0}}$. For some fixed output $(\xi_0,\xi_1)=(k_0,k_1)$, we approximate by integrating the normal probability density function around that point.
Let $G(\cdot)$ be the cumulative distribution function of $\tilde{Q}$ and  $h(k_0,k_1) = G(k_0+0.5,k_1+0.5)-G(k_0+0.5,k_1)-G(k_0,k_1+0.5)+G(k_0,k_1)$, then
\begin{equation}
\|\tilde{P}-\tilde{Q}\|_{TV} = \sup_{(k_0,k_1)}|\mathbb{P}(\xi_0=k_0, \xi_1 = k_1)-h(k_0,k_1)|.
\end{equation}

\begin{lemma}\label{NormalT}
Let $p_i = \frac{2(1-\delta_i)}{1+e^{\epsilon_i}}$, if $\bar{\mu} = \sum_{i=1}^{n-1}(\frac{p_i}{2}, \frac{p_i}{2})'$ and $\mu_0 = (1,0)'+\bar{\mu}$,
$\mu_1 = (0,1)'+\bar{\mu}$, then
$T(N(\mu_0,\pmb{\Sigma}),N(\mu_1,\pmb{\Sigma})) = \Phi(\Phi^{-1}(1-\alpha)-\frac{2}{\sqrt{\sum_{i=1}^{n-1}}p_i})$,
$$\pmb{\Sigma} = \sum_{i=1}^{n-1} \left(
\begin{array}{cc}
\frac{p_i}{2}(1-\frac{p_i}{2}) & -\frac{p_i^2}{4} \\
-\frac{p_i^2}{4} & \frac{p_i}{2}(1-\frac{p_i}{2}) \\
\end{array}
\right).$$
\end{lemma}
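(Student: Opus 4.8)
The plan is to reduce the trade-off function between the two multivariate Gaussians $N(\mu_0,\pmb\Sigma)$ and $N(\mu_1,\pmb\Sigma)$ to a one-dimensional Gaussian testing problem, since the two distributions share a common covariance matrix $\pmb\Sigma$ and differ only in the mean vector. For Gaussians with equal covariance, the likelihood-ratio test is optimal (Neyman--Pearson), and the resulting trade-off function depends only on the Mahalanobis distance $\sqrt{(\mu_0-\mu_1)'\pmb\Sigma^{-1}(\mu_0-\mu_1)}$; concretely $T(N(\mu_0,\pmb\Sigma),N(\mu_1,\pmb\Sigma))(\alpha) = \Phi\bigl(\Phi^{-1}(1-\alpha) - \|\mu_0-\mu_1\|_{\pmb\Sigma^{-1}}\bigr)$. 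This is exactly the $\mu$-GDP trade-off function $G_\mu$ with $\mu = \|\mu_0-\mu_1\|_{\pmb\Sigma^{-1}}$, a fact I would either cite from the $f$-DP framework of \citet{Dong2022gaussian} or prove directly by projecting onto the direction $\pmb\Sigma^{-1}(\mu_0-\mu_1)$.

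First I would compute the mean difference: $\mu_0 - \mu_1 = (1,0)' - (0,1)' = (1,-1)'$, which notably does not depend on $\bar\mu$. Next I would compute $\pmb\Sigma^{-1}(\mu_0-\mu_1)$, or more directly the quadratic form $(1,-1)\,\pmb\Sigma^{-1}\,(1,-1)'$. Writing $s = \sum_{i=1}^{n-1} p_i$ and $t = \sum_{i=1}^{n-1} p_i^2$, the covariance matrix is $\pmb\Sigma = \begin{pmatrix} s/2 - t/4 & -t/4 \\ -t/4 & s/2 - t/4 \end{pmatrix}$. Since $(1,-1)'$ is an eigenvector of any symmetric $2\times 2$ matrix of the form $\begin{pmatrix} a & b \\ b & a\end{pmatrix}$ with eigenvalue $a - b$, here $(1,-1)'$ is an eigenvector of $\pmb\Sigma$ with eigenvalue $(s/2 - t/4) - (-t/4) = s/2$. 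Hence $\pmb\Sigma^{-1}(1,-1)' = \tfrac{2}{s}(1,-1)'$, and the quadratic form equals $\tfrac{2}{s}\cdot\|(1,-1)\|^2 = \tfrac{2}{s}\cdot 2 = \tfrac{4}{s}$. Therefore the Mahalanobis distance is $\sqrt{4/s} = 2/\sqrt{\sum_{i=1}^{n-1} p_i}$, which plugged into $G_\mu$ gives precisely $\Phi\bigl(\Phi^{-1}(1-\alpha) - \tfrac{2}{\sqrt{\sum_{i=1}^{n-1} p_i}}\bigr)$, as claimed.

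The only genuine subtlety is invoking the optimality of the likelihood-ratio test so that the trade-off function is the one for the optimal rejection rule, and confirming that the common-covariance Gaussian trade-off is indeed $G_\mu$ with $\mu$ the Mahalanobis distance; this is standard in the $f$-DP literature but worth stating carefully. The eigenvector observation is the computational heart and is elementary; I would present it as the key step so the reader sees why $\bar\mu$ and the off-diagonal structure conspire to leave only $\sum p_i$ in the final expression. Everything else is routine linear algebra, so I would keep the write-up short and lean on \citet{Dong2022gaussian} for the Gaussian trade-off identity.
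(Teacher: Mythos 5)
Your proposal is correct and follows essentially the same route as the paper: both reduce the problem to the standard trade-off formula $T(N(\mu_0,\pmb\Sigma),N(\mu_1,\pmb\Sigma)) = \Phi\bigl(\Phi^{-1}(1-\alpha)-\sqrt{(\mu_1-\mu_0)'\pmb\Sigma^{-1}(\mu_1-\mu_0)}\bigr)$ and then evaluate the quadratic form $(1,-1)\pmb\Sigma^{-1}(1,-1)' = 4/\sum_{i=1}^{n-1}p_i$. The only difference is cosmetic: the paper writes out $\pmb\Sigma^{-1}$ explicitly in terms of $v_1=\sum p_i$ and $v_2=\sum p_i^2$, whereas you exploit that $(1,-1)'$ is an eigenvector of $\pmb\Sigma$ with eigenvalue $\tfrac{1}{2}\sum p_i$, a slightly cleaner way to reach the same value.
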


\begin{theorem}[Enhanced Central Privacy Upper bound] \label{thm:GDPdistance}
Assume $\rho_0$ and $\rho_1$ are defined in equations (\ref{rho0}) and (\ref{rho1}), then there exists $C>0$, such that
\begin{equation}
T(\rho_0,\rho_1) \ge \left(G_\mu\left({\alpha+\frac{C}{\sqrt{n-1}}}\right)-\frac{C}{\sqrt{n-1}}\right)
,
\end{equation}
where
$G_\mu(\alpha) = \Phi(\Phi^{-1}(1-\alpha)-\mu), 
\mu = \sqrt{\frac{2}{\sum_{i=1}^{n} \frac{1-\delta_i}{1+e^{\epsilon_i}}-\max_{i}{\frac{1-\delta_{i}}{1+e^{\epsilon_{i}}}}}}.
$
In an unambiguous manner, we refer to it as approximately following the $\mu$-GDP.
\end{theorem}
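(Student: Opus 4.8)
The plan is to turn the discrete testing problem $(\rho_0,\rho_1)$ into a Gaussian one and then show that every structural feature of $\rho_0,\rho_1$ beyond a plain Gaussian mean shift can only raise the trade-off function. The three black boxes I would invoke are Theorem~\ref{thm:post3} (central privacy $\ge T(\rho_0,\rho_1)$ after post-processing), Lemma~\ref{BerryBound} (replace the sum $\pmb{V}$ of $\mathrm{MultiBern}$ vectors by a Gaussian, total-variation error $C/\sqrt{n-1}$), and Lemma~\ref{NormalT} (closed form of the resulting Gaussian trade-off function). First I would use Theorem~\ref{thm:post3} together with the data-processing inequality for trade-off functions to get $T(\mathcal{A}_{R,S}(D_0),\mathcal{A}_{R,S}(D_1)) = T(\mathrm{proc}(\rho_0),\mathrm{proc}(\rho_1)) \ge T(\rho_0,\rho_1)$, so it suffices to lower bound $T(\rho_0,\rho_1)$. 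Since the third coordinate of each $\rho_j$ equals the fixed total count $n$ minus the first two, it is a deterministic function of them; hence the $3$-dimensional and $2$-dimensional testing problems have the same trade-off function and I may work with the two-dimensional marginals, still denoted $\rho_0,\rho_1$.

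Write $\rho_0 = B + \pmb{V}$ and $\rho_1 = SB + \pmb{V}$, where $B = (\Delta_0,\Delta_1)$, $S$ swaps the two coordinates, $B$ is independent of $\pmb{V}$, and, with $p_i = \tfrac{2(1-\delta_i)}{1+e^{\epsilon_i}}$, $\pmb{V} = \sum_{i=1}^{n-1}\mathrm{MultiBern}(p_i/2,p_i/2,1-p_i)$ whose first-two-coordinate marginal is exactly the $P$ of Lemma~\ref{BerryBound} with $m = n-1$. By that lemma there is $C>0$ with $\|\mathcal{L}(\pmb{V}) - N(\bar{\mu},\pmb{\Sigma})\|_{TV} \le C/\sqrt{n-1}$, where $\bar{\mu} = \sum_{i=1}^{n-1}(p_i/2,p_i/2)'$ and $\pmb{\Sigma}$ is the covariance of Lemma~\ref{NormalT}. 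Because an independent shift and mixing do not increase total variation, $\|\mathcal{L}(\rho_0) - \mathcal{L}(B + N(\bar{\mu},\pmb{\Sigma}))\|_{TV} \le C/\sqrt{n-1}$ and similarly for $\rho_1$ with $SB$. Using the standard estimate that $\|P-P'\|_{TV},\|Q-Q'\|_{TV}\le\eta$ imply $T(P,Q)(\alpha)\ge T(P',Q')(\alpha+\eta)-\eta$, the task reduces to lower-bounding $T\big(B+N(\bar{\mu},\pmb{\Sigma}),\,SB+N(\bar{\mu},\pmb{\Sigma})\big)$, at the cost of the advertised slack $C/\sqrt{n-1}$ in both arguments.

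Let $q = \tfrac{e^{\epsilon_n}}{1+e^{\epsilon_n}}$, $\mu_0 = (1,0)'+\bar{\mu}$, $\mu_1 = (0,1)'+\bar{\mu}$, $R = N(\bar{\mu},\pmb{\Sigma})$, $\hat M_0 = qN(\mu_0,\pmb{\Sigma})+(1-q)N(\mu_1,\pmb{\Sigma})$, $\hat M_1 = (1-q)N(\mu_0,\pmb{\Sigma})+qN(\mu_1,\pmb{\Sigma})$. Since $B$ takes the values $(1,0),(0,1),(0,0)$ with probabilities $(1-\delta_n)q,(1-\delta_n)(1-q),\delta_n$, one has $B+R = (1-\delta_n)\hat M_0 + \delta_n R$ and $SB+R = (1-\delta_n)\hat M_1 + \delta_n R$; the channel ``keep the input with probability $1-\delta_n$, otherwise output a fresh draw from $R$'' is a data-independent post-processing realizing this, so $T(B+R,SB+R)\ge T(\hat M_0,\hat M_1)$. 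To compare $T(\hat M_0,\hat M_1)$ with $T(N(\mu_0,\pmb{\Sigma}),N(\mu_1,\pmb{\Sigma})) = G_d$, I would note that $\tfrac{d\hat M_1}{d\hat M_0}$ is the monotone increasing function $r\mapsto\tfrac{(1-q)+qr}{q+(1-q)r}$ of $r = \tfrac{dN(\mu_1,\pmb{\Sigma})}{dN(\mu_0,\pmb{\Sigma})}$ (for $\epsilon_n>0$; if $\epsilon_n=0$ then $\hat M_0=\hat M_1$ and the bound is trivial), so the Neyman--Pearson tests for both problems reject on the same nested events $\{r>t\}$. Writing $u = N(\mu_0,\pmb{\Sigma})(r>t)$ and using $N(\mu_1,\pmb{\Sigma})(r>t) = 1-G_d(u)$ from Neyman--Pearson for the Gaussian pair, the $(\alpha,\beta)$-point of the test for $(\hat M_0,\hat M_1)$ computes to $q\,(u,G_d(u)) + (1-q)\,(1-G_d(u),1-u)$: the first point lies on the curve $y=G_d(x)$, and the second lies in the (convex) epigraph of $G_d$ because $1-u = \Phi(\Phi^{-1}(1-u)) \ge \Phi(\Phi^{-1}(1-u)-2d) = G_d(1-G_d(u))$, so by convexity of $G_d$ their convex combination lies in the epigraph, giving $T(\hat M_0,\hat M_1)\ge G_d$. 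By Lemma~\ref{NormalT}, $d = \tfrac{2}{\sqrt{\sum_{i=1}^{n-1}p_i}} = \sqrt{\tfrac{2}{\sum_{i=1}^{n-1}(1-\delta_i)/(1+e^{\epsilon_i})}}$, and since $\sum_{i=1}^{n-1}\tfrac{1-\delta_i}{1+e^{\epsilon_i}} \ge \sum_{i=1}^{n}\tfrac{1-\delta_i}{1+e^{\epsilon_i}} - \max_i\tfrac{1-\delta_i}{1+e^{\epsilon_i}}$ we get $d\le\mu$; as $G_\bullet(\alpha)=\Phi(\Phi^{-1}(1-\alpha)-\bullet)$ decreases in its subscript, $G_d\ge G_\mu$. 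Chaining the inequalities yields $T(\rho_0,\rho_1)(\alpha)\ge G_\mu\big(\alpha+\tfrac{C}{\sqrt{n-1}}\big)-\tfrac{C}{\sqrt{n-1}}$, which is the claim.

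I expect the main obstacle to be the comparison $T(\hat M_0,\hat M_1)\ge G_d$: it is the one step that cannot be phrased as a post-processing (realizing it as such would require swapping the two hypotheses), and it rests on the two observations above --- monotonicity of the transformed likelihood ratio, so that both problems are tested on the same nested events, and the fact that the ``reflected'' point $(1-G_d(u),1-u)$ sits in the convex epigraph of $G_d$, after which convexity of trade-off functions closes the argument. A secondary, bookkeeping-type issue is to make the total-variation-to-trade-off-function passage precise in two dimensions and to ensure the Berry--Esseen constant $C$ is uniform over the admissible parameter vectors $(p_i)$, but that uniformity is exactly what Lemma~\ref{BerryBound} supplies, so no extra work is needed there.
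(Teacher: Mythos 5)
Your proposal is correct, and it shares the paper's overall skeleton: reduce $T(\rho_0,\rho_1)$ to a Gaussian testing problem via Lemma~\ref{BerryBound}, evaluate the Gaussian trade-off with Lemma~\ref{NormalT}, and pay the Berry--Esseen cost $C/\sqrt{n-1}$ in both arguments, with the final $\mu$ obtained by dropping one term from the sum (your $d\le\mu$ step). Where you genuinely diverge is in the treatment of the distinguished user's randomness. The paper conditions on $\Delta_2$, writes $T(\rho_0,\rho_1)=\delta_n(1-\alpha)+(1-\delta_n)T_{symm}(\rho'_0,\rho'_1)$ with the conditional laws replaced outright by the pure shifts $\rho'_0,\rho'_1$ (implicitly the worst case $\epsilon_n\to\infty$; the displayed equality is really an inequality), and then converts total variation to trade-off bounds by applying Fact~4 twice. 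You instead keep the exact mixture structure: the $\delta_n$ mass is absorbed by a common post-processing channel (``keep the input with probability $1-\delta_n$, else resample from $R$''), and the $q$-versus-$(1-q)$ mixture of the two shifted Gaussians is compared to the pure Gaussian pair through a Neyman--Pearson argument --- monotonicity of the transformed likelihood ratio so both problems are tested on the same nested events, then convexity of $G_d$ together with the computation $G_d(1-G_d(u))=\Phi(\Phi^{-1}(1-u)-2d)\le 1-u$ to place every attainable $(\alpha,\beta)$ point in the epigraph of $G_d$. This buys a fully rigorous justification of precisely the step the paper handles informally (its replacement of the conditional mixture by symmetric shifts), at the cost of a somewhat longer argument; your TV-to-trade-off estimate $T(P,Q)(\alpha)\ge T(P',Q')(\alpha+\eta)-\eta$ is also a cleaner packaging of the paper's Fact-4 chaining and lands on the identical final bound.
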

\begin{proof}
First, let's analyze the scenarios where the $n$-th data point differs.
According to the definition of $(\Delta_0,\Delta_1,\Delta_2)$,
\begin{equation}
(\Delta_0,\Delta_1,\Delta_2) =
\begin{cases}
    \begin{array}{lll}
        (0,0,1) & \quad w.p. & \quad \delta_{n}; \\
        (1,0,0) &  \quad w.p. & \quad (1-\delta_{n})\frac{e^{\epsilon_{n}}}{1+e^{\epsilon_{n}}}; \\
        (0,1,0) & \quad  w.p. & \quad (1-\delta_{n})\frac{1}{{1+e^{\epsilon_{n}}}}.
    \end{array}
\end{cases}
\end{equation}
When $\Delta_2 = 1$, $\rho_0$ and $\rho_1$ are  indistinguishable, which indicates that $T(\rho_0,\rho_1)|_{\Delta_2=1} = 1-\alpha$.
 Let $\rho'_0=(1,0,0)'+\sum_{i=1}^{n-1}MultiBern\left(p_i/2,p_i/2,1-p_i\right)$ and $\rho'_1=(0,1,0)'+\sum_{i=1}^{n-1}MultiBern\left(p_i/2,p_i/2,1-p_i\right)$ with $p_i =
 \frac{1-\delta_i}{1+e^{\epsilon_i}}
 $, then
\begin{equation}
T(\rho_0,\rho_1) = \delta_{n}(1-\alpha)+(1-\delta_{n})T_{symm}(\rho'_0,\rho'_1),
\end{equation}
where $T_{symm}(\rho'_0,\rho'_1) = \max \{T(\rho'_0,\rho'_1),T(\rho'_1,\rho'_0)\}$.
Assume $P \sim N(\mu_0,\Sigma)$, $Q \sim N(\mu_1,\Sigma)$, where $\mu_0,\mu_1,\Sigma$ are same as that in Lemma \ref{NormalT}.
Let $\mu = \sqrt{\frac{2}{\sum_{i=1}^{n-1} \frac{1-\delta_i}{1+e^{\epsilon_i}}}}$,
according to equation (\ref{Tv2beta}), $$T(\rho'_0,P) \ge 1- \alpha - \|\rho'_0-P\|_{TV},$$
$$T(\rho'_1,Q) \ge 1- \alpha - \|\rho'_1-Q\|_{TV},$$
then based on Fact 4, $$T(\rho'_0,Q) \ge \Phi(\Phi^{-1}(1- \alpha-\|\rho'_0-P\|_{TV})-\mu)= F(\alpha).$$
Reusing Fact 4, we can obtain that
\begin{eqnarray}    \label{eq}
T(\rho'_0,\rho'_1)&\ge&1-(1-F(\alpha))- \|\rho'_1-Q\|_{TV} \nonumber    \\
~&=&F(\alpha)- \|\rho'_1-Q\|_{TV} \nonumber    \\
\end{eqnarray}
%
Lemma \ref{BerryBound} shows that there exists $C>0$, such that $\|\rho'_1-Q\|_{TV} \le \frac{C}{\sqrt{n-1}}$
and $\|\rho'_0-P\|_{TV} \le \frac{C}{\sqrt{n-1}}$. Hence
$$
T(\rho'_0,\rho'_1) \ge G_\mu\left({\alpha+\frac{C}{\sqrt{n-1}}}\right)-\frac{C}{\sqrt{n-1}}.
$$
Then
\begin{equation*}
\begin{split}
T(\rho_0,\rho_1) &\ge \delta_{n}(1-\alpha) \\
&\quad +(1-\delta_{n})\left(G_\mu\left({\alpha+\frac{C}{\sqrt{n-1}}}\right)-\frac{C}{\sqrt{n-1}}\right).
\end{split}
\end{equation*}
Since for an arbitrary trade-off function $f$, we have $f \le 1-\alpha$, it follows that:
$$
T(\rho_0,\rho_1) \ge \left(G_\mu\left(\alpha+\frac{C}{\sqrt{n-1}}\right)-\frac{C}{\sqrt{n-1}}\right).
$$
Finally, taking into account the case where the $i$-th ($1\le i \le n$) data  differs in neighboring datasets, the privacy bound is determined based on the worst-case scenario, that is,
$
\mu = \sqrt{\frac{2}{\sum_{i=1}^{n} \frac{1-\delta_i}{1+e^{\epsilon_i}}-\max_{i}{\frac{1-\delta_{i}}{1+e^{\epsilon_{i}}}}}}.
$
\end{proof}
\subsubsection*{Comparison with Existing Results}
We provide numerical evaluations for privacy amplification effect under fixed LDP settings in Table \ref{tab:LDPbudgets}.
Given a local privacy budget set $\epsilon^\ell \in [0.01,2]$.
For the purpose of comparison, we examine privacy amplification for a fixed $\epsilon^\ell$ while varying $n$ from $10^3$ to $10^4$, with central $\delta$ for shuffling to be $10^{-4}$ for the sake of simplicity. To avoid misunderstandings, we repeat the first $10^3$ parameters. Considering that convergence rate in Lemma \ref{BerryBound} is nearly $O(1/n)$ and can be negligible in numerical analysis, our focus lies in measuring $G_\mu$.

\begin{table}
  \centering
      
      \setlength{\tabcolsep}{3pt} 
    \begin{tabular}{lc}
    \toprule
    Name  & Distribution of $\epsilon^\ell = (\epsilon_1^\ell,\cdots, \epsilon_n^\ell)$  \\
  \midrule
    Unif 1 & $U(0.01,1)$  \\

    Unif 2 & $U(0.01,2)$ \\
    Constant& $0.5$  \\
     Mixed Constant & $50\%~ 0.5+50\% ~0.01$  \\

    \bottomrule
    \end{tabular}%
    \caption{Distributions of LDP budgets $\epsilon^\ell$. $U(a,b)$ represents uniform distribution ranging from $a$ to $b$.}
    \label{tab:LDPbudgets}
\end{table}%

To keep it concise, we use Fact 3 in Appendix to compute the corresponding central $\epsilon$ and $\delta$ for Theorem \ref{thm:GDPdistance}.
Baseline bounds of privacy amplification effect include: [Liu23] \cite{Liu2023}, [FMT22] \cite{Feldman2022hiding}, [Erlingsson19] \cite{Erlingsson2019amplification}.
[Liu23] provides bounds for the personalized scenario, while [FMT22] and [Erlingsson19] only consider the same $\epsilon^\ell$.

 The numerical results demonstrate the following results:
(i) Our bound is suitable for extreme privacy budgets while [Liu23] required each $\epsilon_i$ should not be close to zero. However, it is natural to encounter user responses that contain no information, resulting in $\epsilon_i=0$.
(ii) As the sample size $n$ increases, the amplification effect also increases proportionally to the square root of $n$.
(iii) Our privacy bounds significantly outperform in all current scenarios, even in cases where the privacy parameters are the same.
\begin{figure}[htbp]
  \centering
  \includegraphics[scale=0.35]{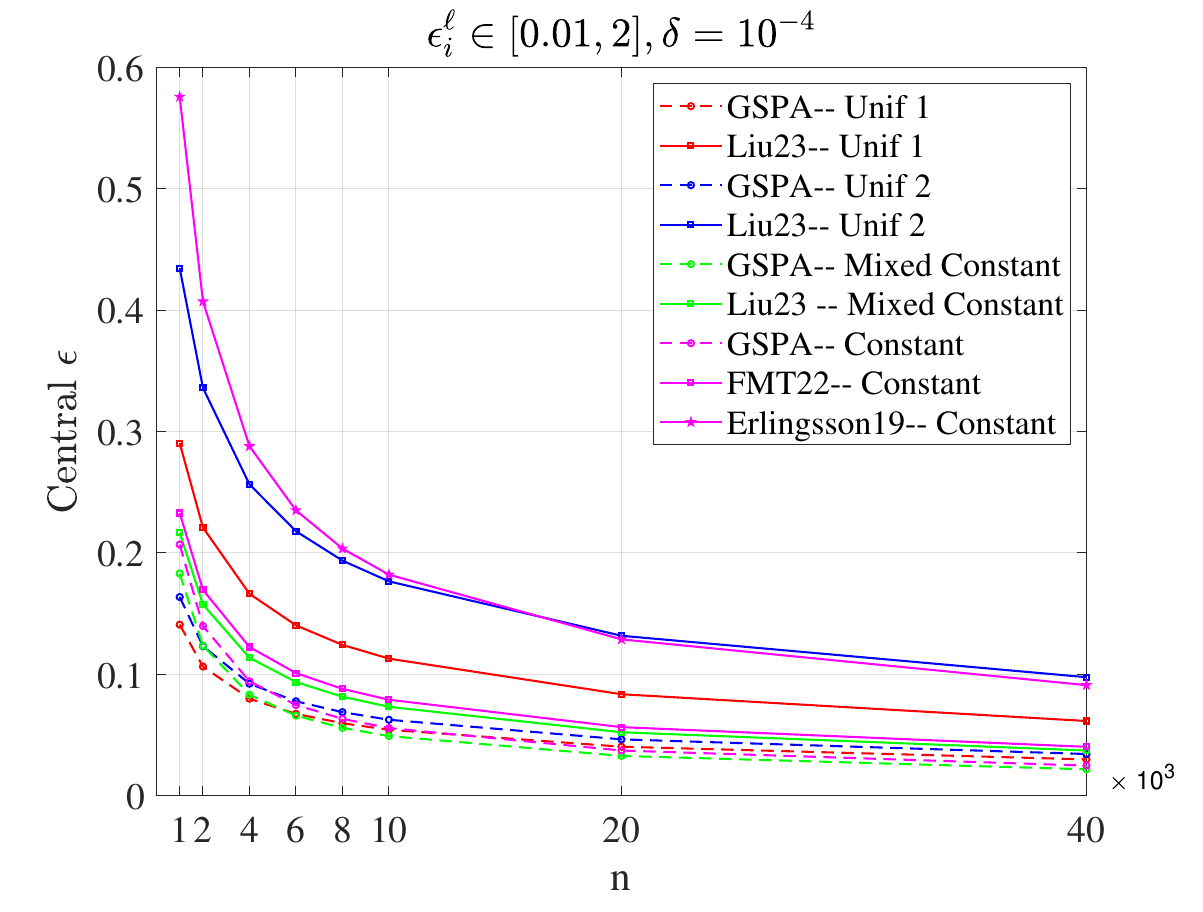}\\
  \caption{Privacy Bounds for Varied Budgets}\label{Fig:VsLiu}
\end{figure}

\section*{Application and Experiments}
All the experiments are implemented on a workstation with an Intel Core i5-1155G7 processor on Windows 11 OS.
\subsection{Application to Mean and Frequency Estimation}

\subsubsection{Mean Estimation}
The average function is a fundamental and commonly used mathematical operation with wide-ranging applications. In this section, we apply GSPA to  the average function on the synthetic data.
We randomly divide the users into three groups: conservative, moderate, and liberal. The fraction of three groups are determined by $f_c, f_m, f_l$.
As is reported \cite{Acquisti2005privacy}, the default values in this experiment are $f_c=0.54, f_m =0.37, f_l=0.09$. For convenience, the privacy preferences for the users in conservative,
moderate and liberal groups are $\epsilon_C, \epsilon_M$ and $\epsilon_l$, respectively. In the LDP case, the privacy preference of users in the liberal group is fixed at $\epsilon_L=1$, while the default values of $\epsilon_C$ and $\epsilon_M$ are set to $0.1$ and $0.5$, respectively. 
\begin{theorem}\label{coro:LapsumGDP}
Algorithm \ref{alg:LapSum} approximately preserves $\mu$-GDP for each user, where $
\mu = \sqrt{\frac{2}{\sum_{i=1}^{n} \frac{1-\delta_i}{1+e^{\epsilon_i}}-\max_{i}{\frac{1-\delta_{i}}{1+e^{\epsilon_{i}}}}}}.
$
\end{theorem}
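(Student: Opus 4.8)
The plan is to obtain Theorem~\ref{coro:LapsumGDP} as an immediate corollary of the general amplification analysis in Theorems~\ref{thm:post3} and~\ref{thm:GDPdistance}, since the mean-estimation protocol is just a concrete instantiation of a shuffled PLDP process followed by a deterministic aggregation. The first step is to verify that each user's local randomizer in Algorithm~\ref{alg:LapSum} — clip $x_i$ to the bounded input domain and add Laplace (or, if applicable, Gaussian) noise calibrated to $\epsilon_i$ (and $\delta_i$) — is an $(\epsilon_i,\delta_i)$-PLDP randomizer with precisely the parameters appearing in the formula for $\mu$. This follows from the standard Laplace/Gaussian mechanism guarantees together with the fact that, after clipping, the $\ell_1$-sensitivity over the \emph{entire} input domain equals the scale constant used in the noise, which is exactly the local-model requirement (worst case over all pairs $x,x'$, not merely neighbors). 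Since mean estimation issues a single non-adaptive query, the auxiliary inputs $z_{1:i-1}$ in the definition of the PLDP adaptive process are vacuous, so the shuffled protocol is a special case of $\mathcal{A}_{R,S}(\mathcal{D})$.

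Second, I would invoke Theorem~\ref{thm:post3}: for any two neighboring datasets $D_0,D_1$ differing in one user's record, $T(\mathcal{A}_{R,S}(D_0),\mathcal{A}_{R,S}(D_1)) = T(proc(\rho_0),proc(\rho_1))$ for the explicit distributions $\rho_0,\rho_1$ of equations~(\ref{rho0})–(\ref{rho1}). Applying the data-processing inequality for trade-off functions (post-processing cannot decrease $T$) gives $T(\mathcal{A}_{R,S}(D_0),\mathcal{A}_{R,S}(D_1)) \ge T(\rho_0,\rho_1)$, and Theorem~\ref{thm:GDPdistance} then yields $T(\rho_0,\rho_1)\ge G_\mu(\alpha+C/\sqrt{n-1})-C/\sqrt{n-1}$ with exactly the claimed $\mu$. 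This is, by definition, the statement that the shuffled reports are approximately $\mu$-GDP. To conclude the theorem for the released mean estimate, I would note that the quantity actually output by the aggregator is a deterministic function (a coordinate-wise sum followed by rescaling and debiasing) of the shuffled noisy reports; since $f$-DP and GDP are closed under post-processing, this final map cannot weaken the guarantee.

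Finally, for the "for each user" phrasing, I would observe that the $-\max_i \frac{1-\delta_i}{1+e^{\epsilon_i}}$ correction in $\mu$ is precisely the price of taking the worst case over which user index is the differing one: by relabeling, any user can be placed in the "$n$-th position" that Theorem~\ref{thm:post3} singles out, and the $n$-th summand is excluded from $\pmb V$, so minimizing the effective amplification over this choice of relabeling removes the largest term $\max_i \frac{1-\delta_i}{1+e^{\epsilon_i}}$ from the sum. The main obstacle — and the only place real care is needed — is the first step: one must confirm that the noise scale hard-coded in Algorithm~\ref{alg:LapSum} corresponds to $(\epsilon_i,\delta_i)$-PLDP with the \emph{same} $(\epsilon_i,\delta_i)$ used in the $\mu$ formula (i.e., that clipping bound and sensitivity are matched so the per-user mechanism is tight rather than merely sufficient, and that the $\delta_i$ in the decomposition of Lemma~\ref{R2Q} matches the mechanism's $\delta_i$). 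Everything after that is a mechanical chaining of results already established in the paper, so no new calculation is required.
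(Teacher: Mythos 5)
Your proposal is correct and follows essentially the same route as the paper: the paper's own proof simply observes that the Laplace noise in Algorithm \ref{alg:LapSum} makes each user's report $(\epsilon_i,0)$-LDP and then invokes Theorem \ref{thm:GDPdistance}. Your additional care about matching the clipping/sensitivity to the noise scale, the post-processing closure for the averaging step, and the relabeling explanation of the $\max_i$ term are just explicit versions of steps the paper leaves implicit (the last of these is already built into the worst-case argument at the end of the proof of Theorem \ref{thm:GDPdistance}).
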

\begin{proof}
According to the definition of Laplace mechanism, data point $i \in [n]$ satisfies $(\epsilon,0)$-LDP. Combined with
Theorem \ref{thm:GDPdistance}, we can obtain that Algorithm \ref{alg:LapSum} approximately satisfies $\mu$-GDP with $
\mu = \sqrt{\frac{2}{\sum_{i=1}^{n} \frac{1-\delta_i}{1+e^{\epsilon_i}}-\max_{i}{\frac{1-\delta_{i}}{1+e^{\epsilon_{i}}}}}}.
$
\end{proof}

Next, we simulate the accuracy for different set of privacy protection. To facilitate comparison, we set $f_l=0.09$ as a fixed value and vary $f_c$ from $0.01$ to $0.5$ with $f_m=1-f_l-f_c$. Additionally, we generate $n=10,000$ privacy budgets for users based on the privacy preferences rule.
We assume that each sample is drawn from a normal distribution $N(50,\sigma^2)$, and then the samples are clipped into the range $[20, 80]$. We repeat this procedure for a total of $1,000$ times to give a confidence interval.
According to Fact \ref{muexpand}, privacy parameter $\mu$ under the shuffle model can be obtained for varying $\epsilon_c$.
Figure \ref{fig:maevsfc} shows that an increase in the proportion of conservative users leads to a decrease in estimation accuracy. On the other hand, Figure \ref{fig:maevsepsilon_c} demonstrates that increasing privacy budget is beneficial for improving accuracy.
\begin{algorithm}[htbp]
     \caption{Mean estimation with GSPA.}
    \label{alg:LapSum}
    \renewcommand{\algorithmicrequire}{\textbf{Input:}}
    \renewcommand{\algorithmicensure}{\textbf{Output:}}
    \begin{algorithmic}[1]
        \REQUIRE Dataset $X = (x_1,\ldots,x_n)\in \mathbb{R}^n$, privacy budget $\mathcal{S} = \{\epsilon_1, \cdots, \epsilon_n\}$ for each user.
        \ENSURE $z \in \mathbb{N}$   
        \FOR{each $i$ $\in [n]$}
        \STATE $y_i \leftarrow x_i+Lap(\Delta f/\epsilon_i)$
        \ENDFOR
        \STATE Choose a random permutation $\pi$: $[n] \rightarrow [n]$
        \STATE $z =\frac{1}{n} \sum_{i=1}^n y_{\pi(i)}$
        \RETURN $z$
    \end{algorithmic}
\end{algorithm}

%

\begin{figure}[htbp]
  \centering
  \begin{subfigure}[b]{0.23\textwidth}
    \centering
    \includegraphics[width=\textwidth]{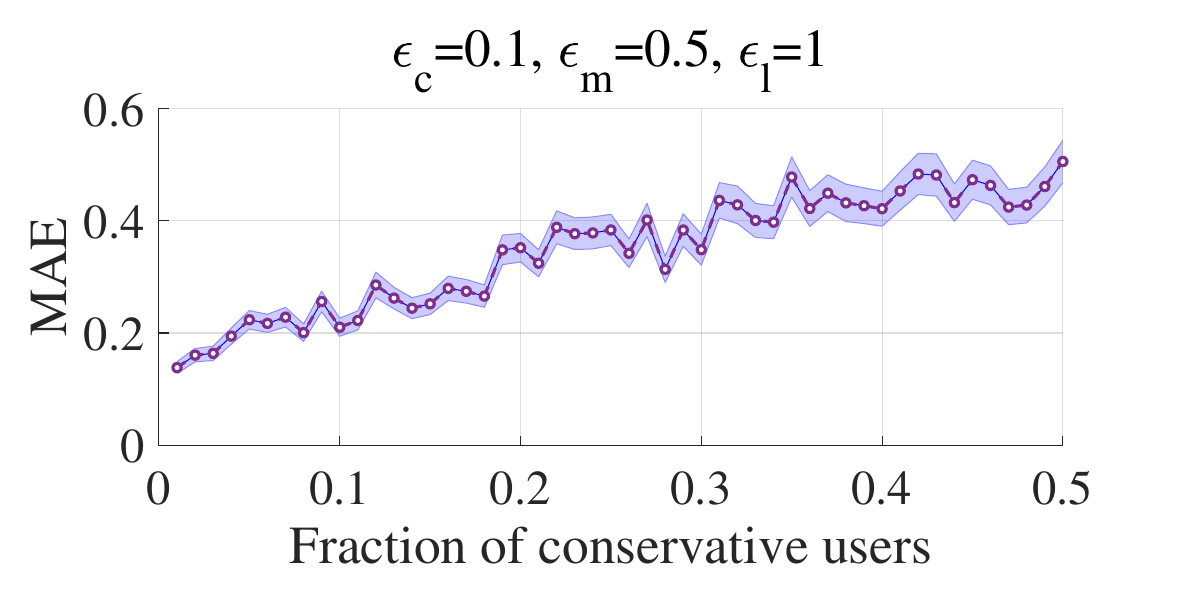}
    \caption{{Impact of $f_c$ (Mean)}}
    \label{fig:maevsfc}
  \end{subfigure}
  \hfill
  \begin{subfigure}[b]{0.23\textwidth}
    \centering
    \includegraphics[width=\textwidth]{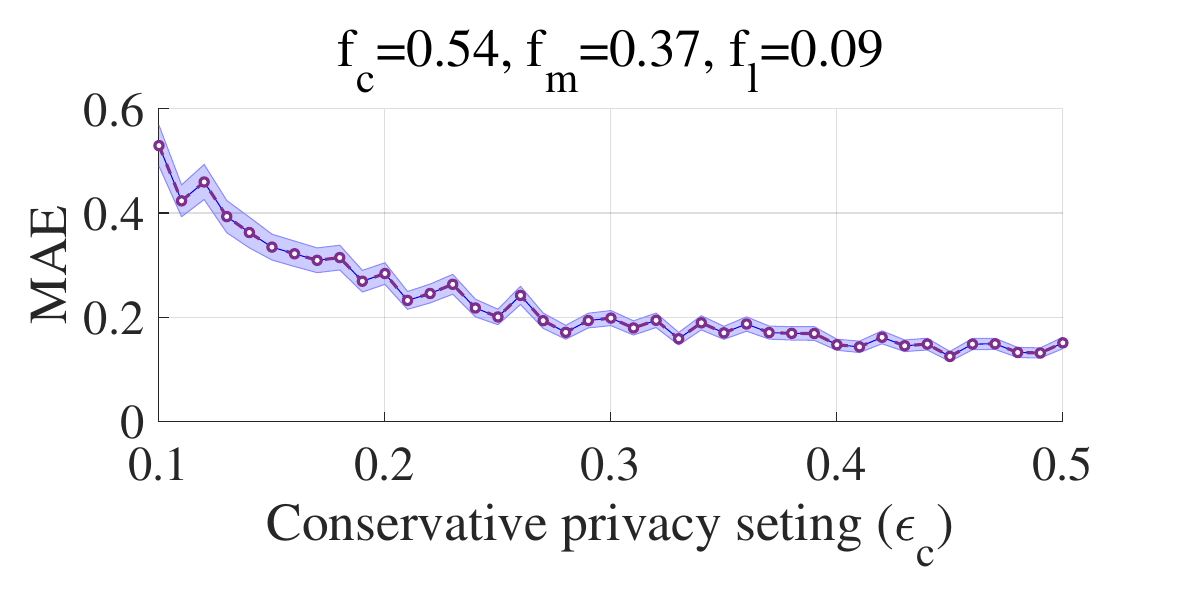}
    \caption{Impact of $\epsilon_c$ (Mean)}
    \label{fig:maevsepsilon_c}
  \end{subfigure}

  \vspace{0.5cm} 

  \begin{subfigure}[b]{0.23\textwidth}
    \centering
    \includegraphics[width=\textwidth]{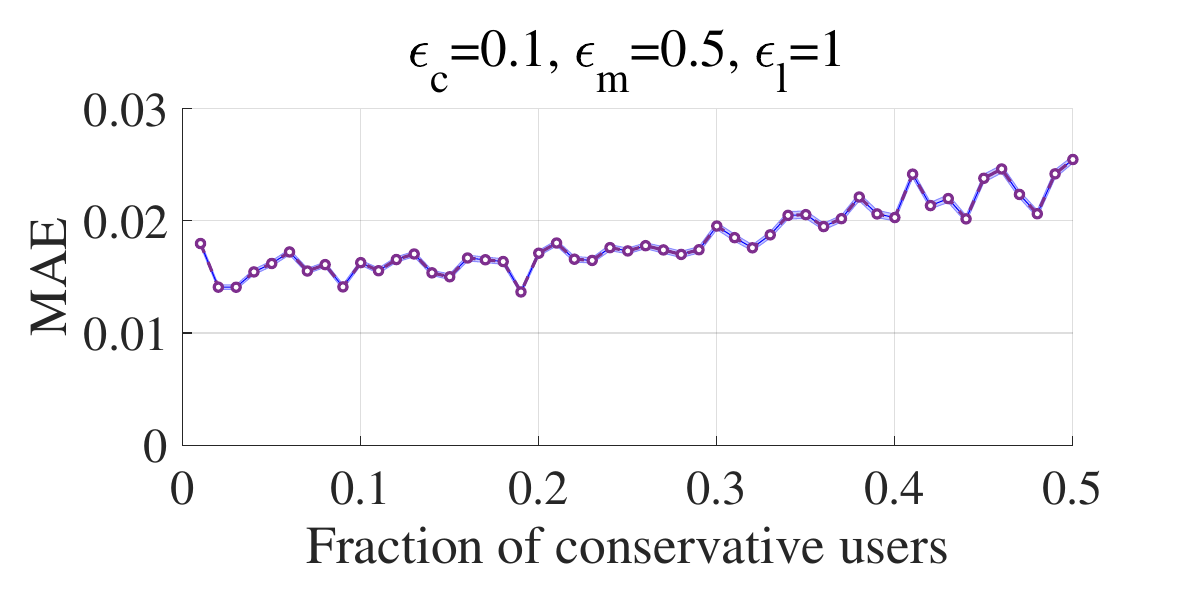}
    \caption{Impact of $f_c$ (Frequency)}
    \label{fig:sub3}
  \end{subfigure}
  \hfill
  \begin{subfigure}[b]{0.23\textwidth}
    \centering
    \includegraphics[width=\textwidth]{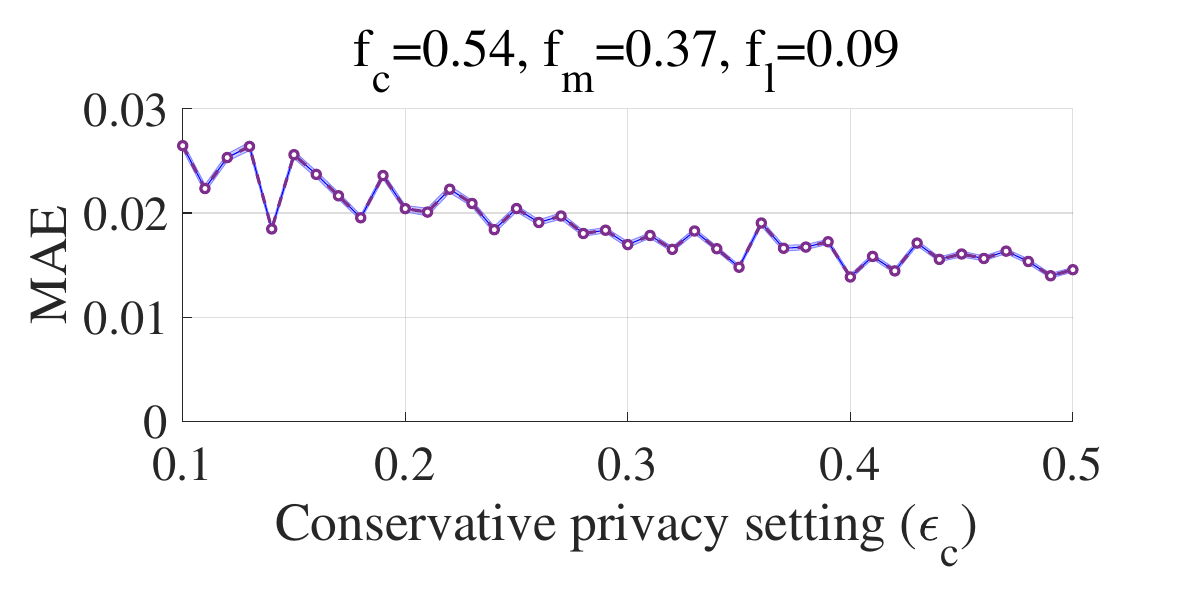}
    \caption{Impact of $\epsilon_c$ (Frequency)}
    \label{fig:sub4}
  \end{subfigure}
  \caption{Impact of privacy parameter settings on MAE.}
  \label{fig:full}
\end{figure}
\subsubsection{Frequency estimation}
In machine learning, frequency estimation is often used as a preprocessing step to understand the distribution and importance of different features or categories within a dataset. By accurately estimating the frequencies of various features or categories, it helps in feature selection, dimensionality reduction, and building effective models. 

In order to obtain the dataset, a total of 10,000 records are generated for counting. Each record is encoded as a binary attribute. The proportion of records with a value of $1$ is determined by a density parameter $c$, which ranges from $0$ to $1$ (with a default value of $c$ = 0.7).
\begin{theorem}\label{coro:ExpSumGDP}
Algorithm \ref{alg:ExpSum} approximately preserves $\mu$-GDP for each user, where $
\mu = \sqrt{\frac{2}{\sum_{i=1}^{n} \frac{1-\delta_i}{1+e^{\epsilon_i}}-\max_{i}{\frac{1-\delta_{i}}{1+e^{\epsilon_{i}}}}}}.
$
\end{theorem}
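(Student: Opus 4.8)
The plan is to mirror the short argument used for Theorem~\ref{coro:LapsumGDP}: reduce the privacy analysis of Algorithm~\ref{alg:ExpSum} to an instance of the shuffled PLDP adaptive process of the Definition preceding Lemma~\ref{R2Q}, and then invoke Theorem~\ref{thm:GDPdistance}. The only genuinely algorithm-specific work is checking that the per-user local randomizer used for the binary attribute attains local privacy level $\epsilon_i$ (with $\delta_i = 0$); everything after that is a direct appeal to the framework already developed.

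\textbf{Step 1 (local guarantee).} I would first verify that the local randomizer applied by user $i$ in Algorithm~\ref{alg:ExpSum} is an $(\epsilon_i,0)$-PLDP mechanism. Since each record is a binary attribute and the randomizer perturbs it via the exponential mechanism (equivalently, randomized response) whose score function takes two values with unit sensitivity, for any inputs $x,x'\in\{0,1\}$ and any output $y$ the likelihood ratio $\mathbb{P}(\mathcal{R}^{(i)}(x)=y)/\mathbb{P}(\mathcal{R}^{(i)}(x')=y)$ is at most $e^{\epsilon_i}$. Hence $\mathcal{R}^{(i)}$ is $(\epsilon_i,0)$-LDP, so $\delta_i = 0$ for every $i$.

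\textbf{Step 2 (reduction and conclusion).} The for-loop over users followed by the uniformly random permutation $\pi$ in Algorithm~\ref{alg:ExpSum} is precisely a non-adaptive shuffled PLDP adaptive process $\mathcal{A}_{R,S}(\mathcal{D})$, because the randomizers $\mathcal{R}^{(i)}$ do not depend on earlier outputs. For two neighboring datasets differing at one user's record, the aggregation step of Algorithm~\ref{alg:ExpSum} is a data-independent post-processing of the shuffled outputs, which does not decrease the trade-off function (post-processing invariance of $f$-DP, as recorded in the Appendix). Therefore Theorem~\ref{thm:GDPdistance} applies verbatim, giving $T(\mathcal{A}_{R,S}(D_0),\mathcal{A}_{R,S}(D_1)) \ge G_\mu\bigl(\alpha+\tfrac{C}{\sqrt{n-1}}\bigr)-\tfrac{C}{\sqrt{n-1}}$ with $\mu = \sqrt{\tfrac{2}{\sum_{i=1}^{n}\frac{1-\delta_i}{1+e^{\epsilon_i}}-\max_i\frac{1-\delta_i}{1+e^{\epsilon_i}}}}$, which by Step 1 specializes to the $\delta_i = 0$ case while remaining valid as stated in the general form. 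This is exactly the claimed $\mu$-GDP guarantee for each user.

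\textbf{Anticipated main obstacle.} The crux is Step 1: pinning down the exact local randomizer that Algorithm~\ref{alg:ExpSum} uses for the binary feature and confirming that its per-user privacy level is $\epsilon_i$ with $\delta_i=0$, including the encoding of the attribute and the sensitivity of the score function, so that a neighboring-dataset change really corresponds to a single-coordinate input change fed to a single randomizer. Once that is settled, recognizing the non-adaptive shuffle structure and applying post-processing invariance are routine, and the bound follows immediately from Theorem~\ref{thm:GDPdistance}.
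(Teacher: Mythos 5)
Your proposal is correct and follows essentially the same route as the paper: the paper simply notes that the proof is identical to Theorem~\ref{coro:LapsumGDP}, i.e., the randomized-response step gives each user an $(\epsilon_i,0)$-LDP local randomizer, after which Theorem~\ref{thm:GDPdistance} and post-processing invariance yield the stated $\mu$-GDP bound. Your Step~1 check that the Bernoulli randomizer has likelihood ratio at most $e^{\epsilon_i}$ is exactly the algorithm-specific fact the paper relies on implicitly.
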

The proof of Theorem \ref{coro:ExpSumGDP} is the same as Theorem \ref{coro:LapsumGDP}. The direct calculation shows that $z$ is an unbiased estimator of $c$, that is, $\mathbb{E}(z)=c$. Similar to the average function, we adopt the same configuration for personalized privacy budgets.

\begin{algorithm}[htbp]
    \caption{Frequency estimation with GSPA}
    \label{alg:ExpSum}
    \renewcommand{\algorithmicrequire}{\textbf{Input:}}
    \renewcommand{\algorithmicensure}{\textbf{Output:}}
    \begin{algorithmic}[1]
        \REQUIRE Dataset $X = (x_1,\ldots,x_n)\in \{0,1\}^n$, privacy budget $\mathcal{S} = \{\epsilon_1, \cdots, \epsilon_n\}$ for each user.
        \ENSURE $z \in \mathbb{N}$   
        \FOR{each $i$ $\in [n]$}
        \IF{$x_i = 1$}
        \STATE $y_i \leftarrow Ber(\frac{e^{\epsilon_i}}{1+e^{\epsilon_i}})$
        \ELSE
        \STATE $y_i \leftarrow Ber(\frac{1}{1+e^{\epsilon_i}})$
        \ENDIF
        \ENDFOR
        \STATE Choose a random permutation $\pi$: $[n] \rightarrow [n]$
        \STATE $A =  \sum_{i=1}^n y_{\pi(i)}$
        \STATE $B = \sum_{i=1}^n \frac{1}{1+e^{\epsilon_{\pi(i)}}}$
        \STATE $z =\frac{A-B}{n-2B} $
        \RETURN $z$
    \end{algorithmic}
\end{algorithm}


\subsection{Personalized Private Stochastic Gradient Descent}
The private stochastic gradient descent is a common method in deep learning \cite{abadi2016deep}. 
However, personalized private stochastic gradient descent  combines personalized differential privacy  with stochastic gradient descent optimization for model training and parameter updates while ensuring privacy protection.
In the context of personalized differential privacy, privacy of individual users must be protected, and direct use of raw data for parameter updates is not feasible .

The key idea of personalized differential privacy is to introduce personalized parameters into the differentially private mechanism to flexibly adjust the level of privacy protection. For the gradient descent algorithm, personalized differential privacy can be achieved by introducing noise during gradient computation.

\begin{theorem}\label{ProSGD1}
Algorithm \ref{alg:SGD} approximately satisfies $\sqrt{T}\mu$-GDP with $
\mu = \sqrt{\frac{2}{\sum_{i=1}^{n} \frac{1-\delta_i}{1+e^{\epsilon_i}}-\max_{i}{\frac{1-\delta_{i}}{1+e^{\epsilon_{i}}}}}}.
$
\begin{proof}
For arbitrary $j \in [m]$, client $j$ satisfies $(\epsilon_j,\delta_j)$-LDP before sending to the shuffler by the definition of Gaussian mechanism. By using Theorem \ref{thm:GDPdistance}, it preserves $\mu$-GDP after shuffling with $
\mu = \sqrt{\frac{2}{\sum_{i=1}^{n} \frac{1-\delta_i}{1+e^{\epsilon_i}}-\max_{i}{\frac{1-\delta_{i}}{1+e^{\epsilon_{i}}}}}}.
$ In addition, combined with Fact \ref{muexpand}, it holds $\sqrt{T}\mu$-GDP under $T$-fold composition.

\end{proof}
\end{theorem}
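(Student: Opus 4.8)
The plan is to reduce the per-iteration privacy guarantee to an application of Theorem \ref{thm:GDPdistance}, and then obtain the $T$-iteration bound via composition of $f$-DP (specifically GDP) guarantees. First I would observe that Algorithm \ref{alg:SGD} (DP-SGD with GSPA) performs, at each of the $T$ rounds, one shuffled PLDP adaptive process: each participating client $j$ computes a clipped gradient, adds Gaussian noise calibrated to its local parameter $\epsilon_j$ (and some $\delta_j$) so that the released vector is $(\epsilon_j,\delta_j)$-LDP by the definition of the Gaussian mechanism, and the shuffler then permutes the $m$ noisy gradients. Each such round is therefore an instance of $\mathcal{A}_{R,S}(\mathcal{D})$, so Theorem \ref{thm:GDPdistance} applies and the single round approximately preserves $\mu$-GDP with $\mu = \sqrt{2 / \left(\sum_{i=1}^{n}\frac{1-\delta_i}{1+e^{\epsilon_i}} - \max_i \frac{1-\delta_i}{1+e^{\epsilon_i}}\right)}$.

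Next I would invoke the composition property of GDP (this is the content of Fact \ref{muexpand}, which I may assume): the $T$-fold adaptive composition of mechanisms that are each $\mu$-GDP is $\sqrt{T}\,\mu$-GDP, because the trade-off function of the composition is the $T$-fold tensor product $G_\mu^{\otimes T} = G_{\sqrt{T}\mu}$. Since the sequence of $T$ shuffled rounds in Algorithm \ref{alg:SGD} is exactly an adaptive composition — the model parameters fed into round $t+1$ depend only on the (already released) outputs of rounds $1,\dots,t$ — this yields that Algorithm \ref{alg:SGD} approximately satisfies $\sqrt{T}\mu$-GDP. The word ``approximately'' is carried through verbatim: each round is only $\mu$-GDP up to the $O(1/\sqrt{n-1})$ Berry–Esseen slack from Theorem \ref{thm:GDPdistance}, and composition aggregates these error terms additively in the trade-off function, which I would note but not quantify precisely.

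The main obstacle, and the point I would be most careful about, is the validity of treating the $T$ rounds as a clean adaptive composition of identical $\mu$-GDP mechanisms. Two subtleties arise: (i) if subsampling of clients is used per round, one must argue the privacy parameter set entering $\mu$ is the same across rounds (or take a worst-case $\mu$ over the rounds), and (ii) the $f$-DP composition theorem is stated for mechanisms whose trade-off functions are exactly $G_\mu$, whereas here each round's trade-off function only dominates a perturbed $G_\mu$; one should appeal to the monotonicity of composition in the $f$-DP order (composing pointwise-larger trade-off functions yields a pointwise-larger result) to push the argument through with the stated approximate guarantee. Modulo these bookkeeping points, the proof is the two-line argument already sketched in the excerpt: Gaussian mechanism $\Rightarrow$ per-client LDP; Theorem \ref{thm:GDPdistance} $\Rightarrow$ per-round $\mu$-GDP; Fact \ref{muexpand} $\Rightarrow$ $\sqrt{T}\mu$-GDP under $T$-fold composition.
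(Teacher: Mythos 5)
Your proposal follows essentially the same route as the paper's own proof: the Gaussian mechanism gives each client an $(\epsilon_j,\delta_j)$-LDP guarantee, Theorem \ref{thm:GDPdistance} upgrades the shuffled round to (approximately) $\mu$-GDP, and the composition property in Fact \ref{muexpand} yields $\sqrt{T}\mu$-GDP over $T$ rounds. Your additional remarks on adaptivity, the propagation of the Berry--Esseen slack, and worst-case handling of the per-round parameter set are sound elaborations of points the paper leaves implicit.
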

\subsubsection*{Dataset and implementation}
The MNIST dataset \cite{lecun1998gradient} for handwritten digit recognition consists of $60,000$ training images and $10,000$ test images. Each sample in the dataset represents a $28 \times 28$ vector generated from handwritten images, where the independent variable corresponds to the input vector, and the dependent variable represents the digit label ranging from $0$ to $9$.
In our experiments, we consider a scenario with $m$ clients, where each client has $n/m$ samples. For simplicity, we train a simple classifier using a feed-forward neural network with ReLU activation units and a softmax output layer with $10$ classes, corresponding to the $10$ possible digits. The model is trained using cross-entropy loss and an initial PCA input layer with $60$ components. At each step of the shuffled SGD, we choose at one client at random without replacement. The parameters of experimental setup is listed in Table \ref{tab:Experimentsetup}.
This experiment is designed to demonstrate the use cases of the shuffle model and therefore does not focus on comparing with previous results. For comparative results, please refer to Figure 2.

\subsubsection*{Parameter Selection}
As a result, our approach achieves an accuracy of $96.78 \%$ on the test dataset after approximately $50$ epochs. This result is consistent with the findings of a vanilla neural network \cite{lecun1998gradient} trained on the same MNIST dataset.
By employing this methodology, we can effectively train a simple classifier that achieves high accuracy in recognizing handwritten digits from the MNIST dataset.

\begin{table}[htbp]
\setlength{\tabcolsep}{3pt} 
  \centering
    \begin{tabular}{lll}
    \toprule
    \textbf{Parameters} & \multicolumn{1}{l}{\textbf{Value}} & \textbf{Explanation} \\
    \midrule
    $C$     & $2$    & Clipping bound \\
    $\delta^{\ell}$ & \multicolumn{1}{l}{$10^{-5}$} & Indistinguishability parameter \\
    $\epsilon^{\ell}$ & \multicolumn{1}{l}{$[0.01,2]$} & Privacy budget \\
    $\eta$ & $0.05$   & Step size of the gradient \\
    $m$     & $100$    & The number of clients \\
    $n$    & $60,000$ & Total number of training samples \\
    $T$    & $50$     & The number of epochs \\
    \bottomrule
    \end{tabular}%
    \caption{Experiment Setting for the Shuffled Personalized-SGD on the MNIST dataset.}
  \label{tab:Experimentsetup}%
\end{table}%
\begin{figure}
  \centering
  \includegraphics[scale=0.4]{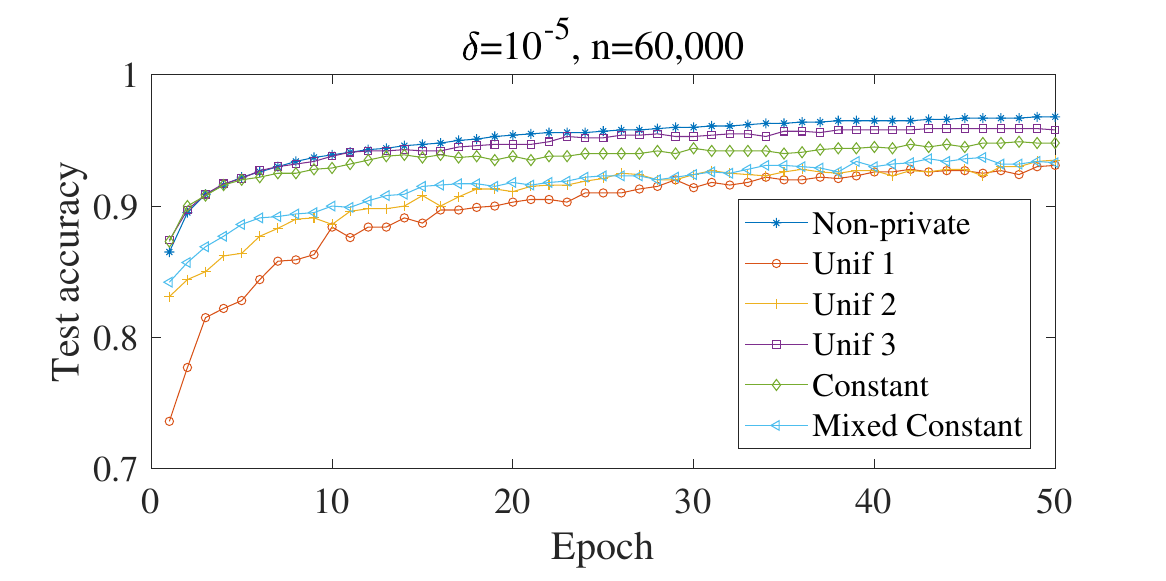}\\
  \caption{Comparison of Test Accuracy with Different Noise Distributions}\label{Fig:MnistSGDTrain}
\end{figure}
\subsubsection*{Utility of GSPA}
We evaluate the utility of GSPA with $\epsilon^\ell$ drawing from Table \ref{tab:LDPbudgets} on MNIST dataset. We introduce Unif 3 as a distribution to represent $U(0.5,1)$.
The numerical results indicate that Unif 3 exhibits the best accuracy, which aligns with expectations as it corresponds to a larger value of the privacy budget. Despite constant scenario exhibiting stronger privacy protection than Unif 2, it actually achieves better accuracy. One possible reason behind this interesting observation is the significant difference in the privacy parameters, which can cause instability in gradient iterations.

\begin{algorithm}[h]
    \caption{SGD with GSPA}
    \label{alg:SGD}
    \renewcommand{\algorithmicrequire}{\textbf{Input:}}
    \renewcommand{\algorithmicensure}{\textbf{Output:}}
    \begin{algorithmic}[1]
        \REQUIRE Dataset $X = (x_1,\ldots,x_n)$, loss function $\mathcal{L}(\pmb \theta,x)$, initial point $\pmb \theta_0$, learning rate $\eta$, number of epochs $T$, privacy budget $\mathcal{S} = \{\epsilon_1, \delta_1, \cdots, \epsilon_n,\delta_n\}$,  batch size $m$ and gradient norm bound $C$. 
        \ENSURE $\hat{\pmb \theta}_{T,m}$   
        \STATE
        Split $[n]$ into $n/m$ disjoint subsets $S_1, \cdots, S_m$ with equal size $m$
        \STATE Choose arbitrary initial point $ \hat{\pmb \theta}_{0}$
        \STATE Choose a random permutation $\pi$: $[m] \rightarrow [m]$
        \FOR{each $t \in [T]$}
        \STATE $\tilde{\pmb\theta}_{t,m} = \hat{\pmb \theta}_{t}$
        \FOR{each $i \in [n/m]$}
        \STATE $\sigma=\frac{2C}{m}\frac{\sqrt{2\log(1.25/\delta_{\pi(i)})}}{\epsilon_{\pi(i)}}$
            \STATE $\pmb b_i \sim N(0,\sigma^2 \pmb{I}_d)$
            \FOR{each $j \in S_{\pi(i)}$}
            \STATE \textbf{Compute gradient}:\\
            $\pmb g_i^{j} = \nabla \ell(\tilde{\pmb \theta}_{i-1},x_j)$
            \ENDFOR
            \STATE \textbf{Clip to norm $C$}:\\
            $\pmb g_i = \frac{\sum_j \pmb g_i^j}{m}$ \\
            $\tilde{\pmb g}_i =  {\pmb g_i} / \max(1,\| \pmb g_i \|_2/C)$
            
            \STATE $\tilde{\pmb \theta}_{i} = \tilde{\pmb \theta}_{i-1} - \eta ( \tilde{\pmb g}_i+\pmb b_i)$
        \ENDFOR
        \STATE $\hat{\pmb \theta}_{t,m} = \tilde{\pmb \theta}_m$
        \ENDFOR
        \RETURN $\hat{\pmb \theta}_{T,m}$
    \end{algorithmic}
\end{algorithm}
\section*{Conclusion and Future Work}
This work focuses on privacy amplification of shuffle model. To address the trade-off between privacy and utility, we propose the GSPA framework, which achieves a higher accuracy while maintaining at least 33\% privacy loss compared to existing methods.

In our future research, we plan to expand by incorporating additional privacy definitions such as Renyi differential privacy \cite{Girgis2021renyi}. 
Moreover, we acknowledge the significance of enhancing techniques for specific data types like images, speech, and other forms. This entails developing specialized privacy metrics, differential privacy mechanisms, and model training algorithms that offer more accurate and efficient privacy protection.
\section*{Appendix}
\renewcommand{\thetheorem}{\thesection.\arabic{theorem}}
\setcounter{theorem}{0}
\subsubsection{$f$-DP}
Here are several important properties of $f$-DP. We present these facts directly for the sake of brevity, and for comprehensive proofs, please refer to the related article \cite{Dong2022gaussian}.
\begin{fact}\label{factfunc}
$(\epsilon,\delta)$-DP is equivalent to $f_{\epsilon,\delta}$-DP, where
\begin{equation}
f_{\epsilon,\delta} = \max\{0,1-\delta-e^{\epsilon}\alpha,e^{-\epsilon}(1-\delta-\epsilon) \}.
\end{equation}
\end{fact}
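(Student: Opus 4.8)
The plan is to prove the two implications of the equivalence separately, in each case moving between the ``event form'' of $(\epsilon,\delta)$-DP, namely $\mathbb{P}(\mathcal{R}(D_0)\in S)\le e^{\epsilon}\mathbb{P}(\mathcal{R}(D_1)\in S)+\delta$ for every measurable $S$ (together with the symmetric statement, since the neighboring relation is symmetric), and the ``testing form'' $T(\mathcal{R}(D_0),\mathcal{R}(D_1))\ge f_{\epsilon,\delta}$. Fix a pair of neighbors and write $P=\mathcal{R}(D_0)$, $Q=\mathcal{R}(D_1)$; it suffices to establish, in each direction, both the $P$-versus-$Q$ and the $Q$-versus-$P$ inequalities.

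First I would show $(\epsilon,\delta)$-DP $\Rightarrow f_{\epsilon,\delta}$-DP. The one real input is that the event inequality lifts to $[0,1]$-valued statistics: $\mathbb{E}_P[\phi]\le e^{\epsilon}\mathbb{E}_Q[\phi]+\delta$ for any measurable $\phi:\Omega\to[0,1]$, obtained either by treating $\phi$ as a post-processing of $\mathcal{R}$ into $\{0,1\}$ (DP is closed under post-processing) or by writing $\mathbb{E}[\phi]=\int_0^1\mathbb{P}(\phi>t)\,dt$ and integrating the event bound. Given a rejection rule $\phi$ with $\alpha_\phi=\mathbb{E}_P[\phi]\le\alpha$ and $\beta_\phi=1-\mathbb{E}_Q[\phi]$, applying this lift to $\phi$ (with $P,Q$ swapped) gives $1-\beta_\phi\le e^{\epsilon}\alpha_\phi+\delta$, i.e. $\beta_\phi\ge 1-\delta-e^{\epsilon}\alpha_\phi$; applying it to the complementary test $1-\phi$ gives $1-\alpha_\phi\le e^{\epsilon}\beta_\phi+\delta$, i.e. $\beta_\phi\ge e^{-\epsilon}(1-\delta-\alpha_\phi)$; and trivially $\beta_\phi\ge0$. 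Hence $\beta_\phi\ge\max\{0,\,1-\delta-e^{\epsilon}\alpha_\phi,\,e^{-\epsilon}(1-\delta-\alpha_\phi)\}$, and since the right-hand side is nonincreasing in $\alpha_\phi$ we may replace $\alpha_\phi$ by $\alpha$; taking the infimum over all feasible $\phi$ gives $T(P,Q)(\alpha)\ge f_{\epsilon,\delta}(\alpha)$ for all $\alpha$, and the same argument with $P,Q$ swapped completes this direction.

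For the converse I would assume $T(P,Q)\ge f_{\epsilon,\delta}$ and $T(Q,P)\ge f_{\epsilon,\delta}$ and test against indicators: for a measurable $S$ put $\phi=\mathbf{1}_S$, so $\alpha_\phi=P(S)$, $\beta_\phi=1-Q(S)$, and $1-Q(S)=\beta_\phi\ge T(P,Q)(P(S))\ge f_{\epsilon,\delta}(P(S))$. Lower-bounding $f_{\epsilon,\delta}(P(S))$ by its first affine piece $1-\delta-e^{\epsilon}P(S)$ yields $Q(S)\le e^{\epsilon}P(S)+\delta$; lower-bounding it by $e^{-\epsilon}(1-\delta-P(S))$ and rearranging yields $P(S^{c})\le e^{\epsilon}Q(S^{c})+\delta$. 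As $S$ ranges over all measurable sets, the first family of inequalities is $\mathbb{P}(\mathcal{R}(D_1)\in A)\le e^{\epsilon}\mathbb{P}(\mathcal{R}(D_0)\in A)+\delta$ and the second is $\mathbb{P}(\mathcal{R}(D_0)\in A)\le e^{\epsilon}\mathbb{P}(\mathcal{R}(D_1)\in A)+\delta$ for all $A$, which together are exactly $(\epsilon,\delta)$-DP.

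The hard part is really just the bookkeeping: verifying that the lift from events to randomized tests is legitimate (equivalently, that the supremum implicit in $T$ and in $TV$ is governed by indicator-type statistics), and checking that the three constraints $\beta\ge0$, $\beta\ge1-\delta-e^{\epsilon}\alpha$, $\beta\ge e^{-\epsilon}(1-\delta-\alpha)$ are exactly the active ones, so that their lower envelope equals $f_{\epsilon,\delta}$ and not something strictly smaller; the symmetry of the neighboring relation then supplies the reverse inequalities in both directions for free. (For the displayed closed form to be the correct trade-off function, the final slot of the $\max$ should read $e^{-\epsilon}(1-\delta-\alpha)$, which is precisely what the computation above produces.)
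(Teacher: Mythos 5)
Your proposal is correct. The paper itself states this fact without proof, deferring to the cited reference (Dong et al., \emph{Gaussian Differential Privacy}), and your argument is essentially the standard one given there: lift the event-level inequality to $[0,1]$-valued tests via $\mathbb{E}[\phi]=\int_0^1\mathbb{P}(\phi>t)\,dt$ to get $\beta_\phi\ge 1-\delta-e^{\epsilon}\alpha_\phi$ and $\beta_\phi\ge e^{-\epsilon}(1-\delta-\alpha_\phi)$, and for the converse test against indicators $\mathbf{1}_S$ and use each affine piece of $f_{\epsilon,\delta}$ to recover the two one-sided DP inequalities. You are also right that the displayed formula contains a typo: the last slot should be $e^{-\epsilon}(1-\delta-\alpha)$, not $e^{-\epsilon}(1-\delta-\epsilon)$, which is exactly what your derivation produces.
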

\begin{fact}
$f$-DP holds the post-processing property, that is, if a mechanism $M$ is $f$-DP, then its post-processing $Proc \circ M$ is also $f$-DP.
\end{fact}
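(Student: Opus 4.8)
The plan is to obtain this as a corollary of a data-processing inequality for trade-off functions: for any two probability distributions $P,Q$ on a common space and any (possibly randomized) post-processing map $\mathrm{Proc}$, one has $T(\mathrm{Proc}(P),\mathrm{Proc}(Q)) \ge T(P,Q)$ pointwise. Granting this, the fact is immediate: if $M$ is $f$-DP then $T(M(D_0),M(D_1)) \ge f$ for every pair of neighbors $D_0,D_1$, hence $T\bigl(\mathrm{Proc}(M(D_0)),\mathrm{Proc}(M(D_1))\bigr) \ge T(M(D_0),M(D_1)) \ge f$, which is precisely the assertion that $\mathrm{Proc}\circ M$ is $f$-DP.

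To prove the data-processing inequality I would represent $\mathrm{Proc}$ as a Markov kernel $K(\cdot\mid\cdot)$ from the input space $\Omega$ to the output space $\Omega'$, so $\mathrm{Proc}(P)(A)=\int_\Omega K(A\mid\omega)\,P(d\omega)$ and likewise for $Q$. Fix any rejection rule $0\le\phi\le1$ on $\Omega'$ and define its pullback $\psi(\omega)=\int_{\Omega'}\phi(\omega')\,K(d\omega'\mid\omega)$. Then $\psi$ is again a $[0,1]$-valued rejection rule on $\Omega$, and by Fubini $\mathbb{E}_{\mathrm{Proc}(P)}[\phi]=\mathbb{E}_P[\psi]$ and $\mathbb{E}_{\mathrm{Proc}(Q)}[\phi]=\mathbb{E}_Q[\psi]$, so the error pair $(\alpha_\phi,\beta_\phi)$ realized on the processed outputs coincides with the pair $(\alpha_\psi,\beta_\psi)$ realized on the raw outputs. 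Thus the set of achievable $(\alpha,\beta)$ pairs after post-processing is contained in the one before post-processing; taking, for each fixed type I level $\alpha$, the infimum of $\beta$ over the smaller feasible set can only increase it, which gives $T(\mathrm{Proc}(P),\mathrm{Proc}(Q))(\alpha)\ge T(P,Q)(\alpha)$.

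The step I expect to require the most care is the measure-theoretic bookkeeping rather than any genuine conceptual difficulty: checking that the pullback $\psi$ is measurable and lands in $[0,1]$, justifying the interchange of integrals, and treating randomized $\mathrm{Proc}$ (deterministic post-processing being the special case where each $K(\cdot\mid\omega)$ is a Dirac mass). A route that sidesteps most of this is to use Fact \ref{factfunc}: $f$-DP is equivalent to being $(\epsilon,\delta(\epsilon))$-DP simultaneously for a whole family of pairs whose curves $f_{\epsilon,\delta}$ have $f$ as their upper envelope; one then invokes the elementary post-processing property of ordinary $(\epsilon,\delta)$-DP for each pair and takes the envelope again. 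Since the paper only cites this fact from \cite{Dong2022gaussian}, either argument is adequate; I would present the Markov-kernel version as the self-contained one.
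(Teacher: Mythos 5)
Your Markov-kernel argument is correct and complete: the paper itself states this fact without proof, deferring to \cite{Dong2022gaussian}, and your pullback-of-the-rejection-rule argument (each $\phi$ on the processed space induces a $\psi=\int\phi\,dK$ on the raw space with identical type I/II errors, so the feasible $(\alpha,\beta)$ region can only shrink and the trade-off function can only increase) is essentially the standard data-processing proof given in that reference, after which $T(\mathrm{Proc}(M(D_0)),\mathrm{Proc}(M(D_1)))\ge T(M(D_0),M(D_1))\ge f$ follows immediately. One caution about your proposed shortcut via Fact~\ref{factfunc}: the family $\{f_{\epsilon,\delta}\}_{\epsilon\ge 0}$ consists of symmetric trade-off functions, so an envelope of $(\epsilon,\delta(\epsilon))$-DP guarantees can recover $f$ exactly only when $f$ is symmetric (which suffices for the $G_\mu$ used in this paper, but not for arbitrary $f$), so the kernel version you designate as primary is indeed the one to keep.
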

\begin{fact}\label{muexpand}($\mu$-GDP)
A $f$-DP mechanism is called $\mu$-GDP if $f$ can be obtained by $f=T(N(0,1),N(\mu,1))=\Phi(\Phi^{-1}(1-\alpha)-\mu)$, where $\Phi(\cdot)$ is cumulative distribution function of standard Gaussian distribution $N(0,1)$.
 Then a mechanism is $\mu$-GDP if and only if it is $(\epsilon,\delta(\epsilon))$-DP for all $\epsilon \ge 0$, where $$
\delta(\epsilon)=\Phi(-\frac{\epsilon}{\mu}+\frac{\mu}{2})
-e^{\epsilon}\Phi(-\frac{\epsilon}{\mu}-\frac{\mu}{2}).
$$
 In
particular, if a mechanism is $\mu$-GDP, then it is $k\mu$-GDP for groups of size $k$ and the $n$-fold composition of $\mu_i$-GDP mechanisms is
$\sqrt{\mu_1^2+ \cdots + \mu_n^2}$-GDP.
\end{fact}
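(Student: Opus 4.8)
The plan is to treat the three assertions of the statement separately, since each rests on a different structural property of the Gaussian trade-off function $G_\mu(\alpha) = \Phi(\Phi^{-1}(1-\alpha) - \mu)$. Throughout I would use the duality between $(\epsilon,\delta)$-DP guarantees and trade-off functions that is implicit in Fact \ref{factfunc}: a convex, decreasing, symmetric trade-off function $f$ is exactly the upper envelope $\sup_\epsilon f_{\epsilon,\delta(\epsilon)}$ of the linear guarantees lying below it, and a single $(\epsilon,\delta)$-DP guarantee corresponds to a supporting line of slope $-e^\epsilon$. Thus $\mu$-GDP (meaning $T \geq G_\mu$) is equivalent to the whole family of $(\epsilon,\delta(\epsilon))$-DP guarantees precisely when each $f_{\epsilon,\delta(\epsilon)}$ is a supporting line of $G_\mu$, so identifying the sharpest pairs reduces to reading off the tangent lines of $G_\mu$.

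For the first claim I would fix $\epsilon \geq 0$ and locate the supporting line of $G_\mu$ with slope $-e^\epsilon$. Writing $z = \Phi^{-1}(1-\alpha)$ and letting $\phi$ denote the standard normal density, differentiation gives $G_\mu'(\alpha) = -\phi(z-\mu)/\phi(z)$, and the Gaussian density ratio collapses to $G_\mu'(\alpha) = -\exp(\mu z - \mu^2/2)$. Setting this equal to $-e^\epsilon$ pins the tangency at $z = \epsilon/\mu + \mu/2$, i.e. $\alpha^\ast = \Phi(-\epsilon/\mu - \mu/2)$ with $G_\mu(\alpha^\ast) = \Phi(\epsilon/\mu - \mu/2)$. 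The tangent line is $\beta = (G_\mu(\alpha^\ast) + e^\epsilon \alpha^\ast) - e^\epsilon \alpha$; matching its value at $\alpha = 0$ against the linear piece $1 - \delta - e^\epsilon\alpha$ of $f_{\epsilon,\delta}$ yields $\delta(\epsilon) = 1 - G_\mu(\alpha^\ast) - e^\epsilon \alpha^\ast$, which after applying $1 - \Phi(t) = \Phi(-t)$ is exactly $\delta(\epsilon) = \Phi(-\epsilon/\mu + \mu/2) - e^\epsilon \Phi(-\epsilon/\mu - \mu/2)$. Because $G_\mu$ is symmetric (one checks directly that $G_\mu = G_\mu^{-1}$), the second linear piece $e^{-\epsilon}(1 - \delta - \alpha)$ of $f_{\epsilon,\delta}$ contributes the same $\delta$, so no slack is lost and both directions of the equivalence follow.

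For composition I would invoke the tensor-product behaviour of trade-off functions: if two mechanisms satisfy $T \geq G_{\mu_1}$ and $T \geq G_{\mu_2}$ respectively, the joint mechanism's trade-off function is bounded below by $G_{\mu_1} \otimes G_{\mu_2}$, the trade-off function of the product distributions. The crux is the identity $G_{\mu_1} \otimes G_{\mu_2} = G_{\sqrt{\mu_1^2 + \mu_2^2}}$: the product problem is distinguishing $N(0, I_2)$ from $N((\mu_1,\mu_2), I_2)$, where by the Neyman--Pearson lemma the likelihood-ratio test depends only on the projection onto $(\mu_1,\mu_2)$; rotational invariance of the isotropic Gaussian collapses this to the one-dimensional problem $N(0,1)$ versus $N(\sqrt{\mu_1^2 + \mu_2^2}, 1)$, whose trade-off function is $G_{\sqrt{\mu_1^2+\mu_2^2}}$. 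Iterating over the $n$ mechanisms gives $\sqrt{\mu_1^2 + \cdots + \mu_n^2}$-GDP. Group privacy is handled by the same Gaussian structure: chaining a length-$k$ path of single-record neighbors $D = D^{(0)} \sim \cdots \sim D^{(k)} = D'$, each adjacent pair has trade-off at least $G_\mu$, and the additive shift structure ($N(0,1)$ to $N(\mu,1)$ to $N(2\mu,1)$ and so on) accumulates to $G_{k\mu}$.

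The main obstacle is the composition step, specifically justifying rigorously that the optimal trade-off for a product mechanism equals, rather than merely dominates, the tensor product of the individual trade-off functions, and that this lower bound holds uniformly over all neighboring pairs and all adaptive choices of the second mechanism given the first output. This demands the full Neyman--Pearson machinery together with a careful treatment of adaptivity, which is where the bulk of the technical effort lies; by contrast, the $(\epsilon,\delta(\epsilon))$ computation and the group-privacy chaining are comparatively routine once the duality and tensor-product frameworks are in place.
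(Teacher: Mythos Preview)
The paper does not prove this fact at all: it is listed in the Appendix among ``several important properties of $f$-DP'' that are ``present[ed] \ldots\ directly for the sake of brevity,'' with comprehensive proofs deferred to \cite{Dong2022gaussian}. Your proposal is therefore not competing against an in-paper argument but supplying one where none exists.

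That said, your outline is correct and is essentially the argument of the cited source. The tangent-line computation for $\delta(\epsilon)$ via $G_\mu'(\alpha)=-\exp(\mu z-\mu^2/2)$ is exactly the Dong--Roth--Su derivation, and the composition identity $G_{\mu_1}\otimes G_{\mu_2}=G_{\sqrt{\mu_1^2+\mu_2^2}}$ via rotational invariance of the isotropic Gaussian likelihood ratio is their Corollary~3.3. One small sharpening for group privacy: rather than the informal ``additive shift'' picture, it is cleaner to invoke the functional composition of trade-off functions (Fact~4 in this paper, $T(P,R)\ge g(1-f(\alpha))$ when $T(P,Q)\ge f$ and $T(Q,R)\ge g$) and then verify the closed-form identity $G_\mu\circ G_\mu=G_{2\mu}$ directly, which iterates to $G_\mu^{\circ k}=G_{k\mu}$. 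Your identification of the adaptive-composition lower bound as the genuinely technical step is accurate; everything else is calculus.
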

\begin{fact}
Suppose $T(P,R) \ge f, T(Q,R) \ge g,$ then $T(P,R) \ge f \circ g = g(1-f(\alpha))$.
\end{fact}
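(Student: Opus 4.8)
The plan is to read the statement as the standard transitivity (composition) property of trade-off functions across an intermediate distribution: as worded it contains a transcription slip, and the intended hypotheses are $T(P,Q)\ge f$ and $T(Q,R)\ge g$, with conclusion $T(P,R)\ge f\circ g$ where $(f\circ g)(\alpha)=g(1-f(\alpha))$. (This is exactly the shape in which it is applied in the proof of Theorem \ref{thm:GDPdistance}, with the intermediate distribution playing the role of $Q$.) The entire argument reuses a single rejection rule $\phi$ against all three distributions, chaining the two given bounds through the middle distribution.

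First I would fix an arbitrary measurable rejection rule $\phi$ whose type I error against $P$ is controlled, i.e. $\mathbb{E}_P[\phi]\le\alpha$. Viewing $\phi$ as a test for the pair $(P,Q)$, its type I error is still $\mathbb{E}_P[\phi]\le\alpha$, so the hypothesis $T(P,Q)\ge f$ together with the definition of the trade-off function gives $1-\mathbb{E}_Q[\phi]\ge T(P,Q)(\alpha)\ge f(\alpha)$, equivalently $\mathbb{E}_Q[\phi]\le 1-f(\alpha)$. Because $f$ is a trade-off function we have $0\le f(\alpha)\le 1$, so $1-f(\alpha)$ lies in $[0,1]$ and is a legitimate type I level for the next step.

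Next I would reinterpret the same $\phi$ as a test for the pair $(Q,R)$. Its type I error against $Q$ is exactly $\mathbb{E}_Q[\phi]\le 1-f(\alpha)$, so $\phi$ is feasible in the infimum defining $T(Q,R)(1-f(\alpha))$. Hence $1-\mathbb{E}_R[\phi]\ge T(Q,R)(1-f(\alpha))\ge g(1-f(\alpha))$. But $1-\mathbb{E}_R[\phi]$ is precisely the type II error $\beta_\phi$ of $\phi$ when testing $P$ against $R$, so $\beta_\phi\ge g(1-f(\alpha))=(f\circ g)(\alpha)$ for every $\phi$ with $\mathbb{E}_P[\phi]\le\alpha$. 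Taking the infimum over all such $\phi$ yields $T(P,R)(\alpha)\ge(f\circ g)(\alpha)$, which is the claim.

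There is no deep obstacle here; the argument is a clean two-link chain and the main care is bookkeeping. The two points I would watch are keeping the roles of type I and type II errors straight, so that the middle bound $\mathbb{E}_Q[\phi]\le 1-f(\alpha)$ is produced as a type I level for the $(Q,R)$ comparison rather than accidentally as a type II level, and confirming that $\phi$ genuinely qualifies in the second infimum, which follows directly from feasibility rather than from any attainment of an optimizer. Monotonicity of trade-off functions is available if one wants to pass from the level $\mathbb{E}_Q[\phi]$ to the value $g(1-f(\alpha))$, but it is not strictly needed since $\phi$ is already admissible at level $1-f(\alpha)$.
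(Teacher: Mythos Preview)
Your argument is correct and is the standard two-step chaining of a single rejection rule through the intermediate distribution; you also correctly identify and repair the typo in the stated hypotheses (the first assumption should be $T(P,Q)\ge f$, not $T(P,R)\ge f$). There is nothing to compare against, since the paper does not prove this fact at all: it lists it among several properties of $f$-DP in the appendix with the remark ``we present these facts directly for the sake of brevity, and for comprehensive proofs, please refer to the related article \cite{Dong2022gaussian},'' and then invokes it (as ``Fact~4'') inside the proof of Theorem~\ref{thm:GDPdistance} exactly in the way you describe.
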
 
 The relationship between $f$-DP and traditional DP has been illustrated from the perspective of hypothesis testing \cite{Dong2022gaussian}. It provides a visual representation of how the choice of parameter $\mu$ in $\mu$-GDP relates to the strength of privacy protection. 

The $(\epsilon_i,\delta_i)$-PLDP mechanism can be dominated by the following hypothesis testing problem \cite{Kairouz2015composition}. This forms the foundation for the subsequent analysis.
\begin{lemma}[KOV15]\label{DPtransform}
Let $\mathcal{R}^{(i)}: \mathcal{D} \rightarrow \mathcal{S}$ be an $(\epsilon_i, \delta_i)$-DP local randomizer,
and $x_0,x_1 \in D$, then there exist two quaternary random variables $\tilde{X_0}$ and $\tilde{X_1}$, such that
$\mathcal{R}^{(i)}(x_0)$ and $\mathcal{R}^{(i)}(x_1)$ can be viewed as post-processing of $\tilde{X_0}$ and $\tilde{X_1}$, respectively.
In details,
$$
P(\tilde{X_0}=x) = \left\{
    \begin{array}{ll}
        \delta_i & \mbox{if } x=A, \\
        \frac{(1-\delta_i)e^{\epsilon_i}}{1+e^{\epsilon_i}} & \mbox{if } x = 0, \\
        \frac{1-\delta_i}{1+e^{\epsilon_i}} & \mbox{if } x = 1, \\
        0 & \mbox{if } x = B,
    \end{array}
\right.
$$
and
$$
P(\tilde{X_1}=x) = \left\{
    \begin{array}{ll}
        0 & \mbox{if } x=A, \\
        \frac{1-\delta_i}{1+e^{\epsilon_i}} & \mbox{if } x = 0, \\
        \frac{(1-\delta_i)e^{\epsilon_i}}{1+e^{\epsilon_i}} & \mbox{if } x = 1, \\
        \delta_i & \mbox{if } x = B.
    \end{array}
\right.
$$
\end{lemma}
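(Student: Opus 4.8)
The plan is to recast the claim as the existence of a single Markov kernel and then build it explicitly from the likelihood ratio between $\mathcal{R}^{(i)}(x_0)$ and $\mathcal{R}^{(i)}(x_1)$. Write $U=\mathcal{R}^{(i)}(x_0)$ and $V=\mathcal{R}^{(i)}(x_1)$. Saying that $U$ and $V$ are post-processings of $\tilde{X_0}$ and $\tilde{X_1}$ by a \emph{common} map is equivalent to producing four probability measures $\mu_A,\mu_0,\mu_1,\mu_B$ on $\mathcal{S}$ (the rows $K(A,\cdot),K(0,\cdot),K(1,\cdot),K(B,\cdot)$ of the kernel) such that
\begin{equation*}
U = \delta_i\,\mu_A + \tfrac{(1-\delta_i)e^{\epsilon_i}}{1+e^{\epsilon_i}}\mu_0 + \tfrac{1-\delta_i}{1+e^{\epsilon_i}}\mu_1, \quad V = \tfrac{1-\delta_i}{1+e^{\epsilon_i}}\mu_0 + \tfrac{(1-\delta_i)e^{\epsilon_i}}{1+e^{\epsilon_i}}\mu_1 + \delta_i\,\mu_B.
\end{equation*}
Since these mixture weights are exactly $P(\tilde{X_0}=\cdot)$ and $P(\tilde{X_1}=\cdot)$, pushing $\tilde{X_0}$ through $K$ reproduces $U$ and pushing $\tilde{X_1}$ reproduces $V$. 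First I would fix a dominating measure $\lambda=U+V$ with densities $u=dU/d\lambda$ and $v=dV/d\lambda$, reducing the whole statement to a pointwise problem in $u,v$.

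Next I would partition $\mathcal{S}$ by the likelihood ratio into $R_+=\{u>e^{\epsilon_i}v\}$, $R_-=\{v>e^{\epsilon_i}u\}$, and the balanced region $R_0$ where $e^{-\epsilon_i}\le u/v\le e^{\epsilon_i}$. The key observation is that $(\epsilon_i,\delta_i)$-DP forces the excess masses $\delta_+=\int_{R_+}(u-e^{\epsilon_i}v)\,d\lambda$ and $\delta_-=\int_{R_-}(v-e^{\epsilon_i}u)\,d\lambda$ to be at most $\delta_i$; these are precisely what outcomes $A$ and $B$ must absorb, so I would let $\mu_A$ and $\mu_B$ carry normalized versions of $(u-e^{\epsilon_i}v)_+$ and $(v-e^{\epsilon_i}u)_+$. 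With $\delta_i\mu_A$ and $\delta_i\mu_B$ fixed, the two mixture equations become a $2\times 2$ linear system for the densities $q_0,q_1$ of $\mu_0,\mu_1$; inverting the matrix $\bigl(\begin{smallmatrix}e^{\epsilon_i}&1\\1&e^{\epsilon_i}\end{smallmatrix}\bigr)$ gives closed forms such as $q_0\propto e^{\epsilon_i}(u-\delta_i a)-(v-\delta_i b)$, where $a,b$ are the densities of $\mu_A,\mu_B$. A short computation using $\int u=\int v=1$ shows that $q_0,q_1$ integrate to $1$ automatically, so only pointwise nonnegativity remains, and this holds region by region directly from the definitions of $R_+,R_-,R_0$.

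The main obstacle is normalization: when the mechanism is not tight the true excess masses satisfy $\delta_+,\delta_-<\delta_i$, so the excess parts of $\mu_A,\mu_B$ carry mass $\delta_+/\delta_i,\delta_-/\delta_i<1$ and must be padded up to total mass $1$ without destroying $q_0,q_1\ge 0$. I would place the padding inside $R_0$, where $q_0>0$ and $q_1>0$ in the interior leave slack, and argue from the mass-balance identities relating $U(R_\pm),V(R_\pm)$ to $\delta_\pm$ that $R_0$ always has enough room to accommodate the required extra mass; re-checking $q_0,q_1\ge 0$ after padding is then routine but must be handled carefully on the boundary sets where a density vanishes or the ratio equals $e^{\pm\epsilon_i}$. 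Setting $K(A,\cdot)=\mu_A,\ K(0,\cdot)=\mu_0,\ K(1,\cdot)=\mu_1,\ K(B,\cdot)=\mu_B$ then defines the common post-processing and completes the proof.

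As an independent cross-check I would verify the statement through the $f$-DP lens already in the paper. A direct computation gives $T(\tilde{X_0},\tilde{X_1})=f_{\epsilon_i,\delta_i}$, while $(\epsilon_i,\delta_i)$-DP of $\mathcal{R}^{(i)}$ means $T(U,V)\ge f_{\epsilon_i,\delta_i}$ by Fact \ref{factfunc}. Since the two hypotheses constitute a binary experiment, Blackwell's theorem---that domination of trade-off functions is equivalent to the existence of a stochastic post-processing between the experiments---yields exactly such a kernel, confirming that the explicit construction above is not merely sufficient but in fact forced by the privacy constraint.
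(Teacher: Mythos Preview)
The paper does not supply its own proof of this lemma; it is quoted from \cite{Kairouz2015composition} and used as a black box. So there is no in-paper argument to compare against, and I will simply assess your plan on its merits.

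Your reformulation is exactly right: the content of the lemma is the existence of a \emph{single} Markov kernel $K$ with $K\tilde{X_0}=U$ and $K\tilde{X_1}=V$, and the four rows of $K$ are precisely your $\mu_A,\mu_0,\mu_1,\mu_B$. The likelihood-ratio partition into $R_+,R_0,R_-$, the identification of $\delta_\pm$ with the hockey-stick excesses, and the $2\times 2$ inversion for $q_0,q_1$ are all correct; the observation that $\int q_0=\int q_1=1$ follows automatically from $\int a=\int b=1$ is also correct and saves work.

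One refinement on the padding step: as you phrase it, the slack lives ``inside $R_0$'', but $R_0$ can be empty (take $U,V$ to be binary randomized response with ratio exactly $e^{\epsilon_i}$, viewed as an $(\epsilon_i,\delta_i)$-mechanism for some $\delta_i>0$ only when the ratio exceeds $e^{\epsilon_i}$; more to the point, tight mechanisms have $\delta_\pm=\delta_i$ and need no padding, while non-tight ones have positive $R_0$-mass). The cleaner bookkeeping is to write the two nonnegativity constraints as $\delta_i a\le u-e^{-\epsilon_i}v+e^{-\epsilon_i}\delta_i b$ and $\delta_i b\le v-e^{-\epsilon_i}u+e^{-\epsilon_i}\delta_i a$ and note that they are coupled favourably: adding $b$ loosens the bound on $a$ and vice versa. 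A short total-mass count then shows feasibility on all of $\mathcal{S}$, not just $R_0$; your sketch is correct but the localisation to $R_0$ is slightly too restrictive as stated.

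Finally, your ``cross-check'' is in fact a complete and much shorter proof, and is essentially how Kairouz--Oh--Viswanath argue: the direct computation $T(\tilde{X_0},\tilde{X_1})=f_{\epsilon_i,\delta_i}$ together with $T(U,V)\ge f_{\epsilon_i,\delta_i}$ (which is exactly $(\epsilon_i,\delta_i)$-DP via Fact~\ref{factfunc}) is precisely the hypothesis of Blackwell's sufficiency theorem for binary experiments, and the conclusion is the existence of the kernel $K$. If you want the shortest route, promote that paragraph to the main argument and drop the explicit construction, or keep the construction as a constructive appendix.
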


\subsubsection{Proof of Theorem \ref{thm:post3}}
\begin{proof}
Formally, for each $i \in \{ 2, \cdots, n\}$, let $p_i =  \frac{2(1-\delta_i)}{1+e^{\epsilon_i}}$, we
define random variables $Y_{1,i}^0$, $Y_{1,i}^1$ and $Y_i$ as follows:
\begin{equation}
Y_{1,i}^0 =
\begin{cases}
    \begin{array}{lll}
        0 &  \quad w.p. &\quad e^{\epsilon_i}\frac{p_i}{2}, \\
        1 &  \quad w.p. &\quad \frac{p_i}{2}, \\
        2 &  \quad w.p. &\quad  1-e^{\epsilon_i}\frac{p_i}{2}-\frac{p_i}{2}.
    \end{array}
\end{cases}
\end{equation}
\begin{equation}
Y_{1,i}^1 =
\begin{cases}
    \begin{array}{lll}
        0 &  \quad w.p. &\quad \frac{p_i}{2}, \\
        1 &  \quad w.p. &\quad e^{\epsilon_i}\frac{p_i}{2}, \\
        2 &  \quad w.p. &\quad  1-e^{\epsilon_i}\frac{p_i}{2}-\frac{p_i}{2}.
    \end{array}
\end{cases}
\end{equation}
and
\begin{equation}
Y_i =
\begin{cases}
    \begin{array}{lll}
        0 &  \quad w.p. &\quad \frac{p_i}{2}, \\
        1 &  \quad w.p. &\quad \frac{p_i}{2}, \\
        2 &  \quad w.p. &\quad  1-p_i.
    \end{array}
\end{cases}
\end{equation}
We consider the case in the $t$-th iteration. Given a dataset $X_b$ for $b \in \{0,1\}$, we generate $n$ samples from $\{0,1,2\}$ in the following way. Client number one reports a sample from $Y_{1,i}^b$. Client $i$ ($i=2,\cdots, n)$ each reports an independent sample from $Y_i$. We then shuffle the reports randomly. Let $\rho_b$ denote the resulting distribution over $\{0,1,2\}^n$.
We then count the total number of $0$s and $1$s. Note that a vector containing a permutation of the users responses contains no more information than simply the number of $0$s and $1$s, so we can consider these two representations as equivalent.\\
We claim that there exists a post-processing function $proc(\cdot)$ such that for $y$ sampled from $\rho_b$, $proc(y)$ is distributed identically to $\mathcal{A}_S(X_b)$.
To see this, let $\pi$ be a randomly and uniformly chosen permutation of $\{1,\cdots,n\}$. For every $i \in \{1,\cdots,n\}$,
given the hidden permutation $\pi$, we can generate a sample from $\mathcal{A}_S(X_b)$ by sequentially transforming $proc(y_t)$ to obtain the correct mixture components, then sampling from the corresponding mixture component. Specially, by Lemma \ref{R2Q},
  \begin{equation}
z_t =
\begin{cases}
    \begin{array}{lll}
        \mathcal{R}^{(t)}(z_{1:t-1},x_1^0) &  \text{if} & y_t = 0; \\
        \mathcal{R}^{(t)}(z_{1:t-1},x_1^1) &  \text{if} & y_t = 1; \\
        \mathcal{R}^{(t)}(z_{1:t-1},x_{\pi(i)}) &  \text{if} & y_t = 2.
    \end{array}
\end{cases}
\end{equation}
By our assumption, this produces a sample $z_t$ from $\mathcal{R}^{(i)}(x_{\pi(i)}).$ It is easy to see that the resulting random variable $(z,y)$ has the property that for
input $b \in \{0,1\}$, its marginal distribution over $\mathcal{S}$ is the same as $\mathcal{A}_S(X_b)$ and its marginal distribution over $\{0,1,2\}^n$ is $\rho_b$.
The difficulty then lies in the fact that conditioned on a particular instantiation $y=v$, the permutation $\pi|_{y=v}$ is not independent of $b$.
Note that if $v_t=0$ or $1$, the corresponding $\mathcal{Q}^{0(t)}_1(z_{1:t-1})$ or $Q_1^{1(t)}(z_{1:t-1}),$ is independent of $\pi$. Therefore, in order to do the appropriate
post-processing, it suffices to know the permutation $\pi$ restricted to the set of users who sampled $2$, $K=\pi(\{i:y_i=2\})$. The set $K$ of users who select $2$ is independent of $b$ since $Y_{1,i}^0$ and $Y_{1,i}^1$ have the same probability of sampling $2$. The probability of being included in $K$ is identical for each $i \in \{2,\cdots,n\},$ and slightly smaller for the first user. Since the sampling of $z$ given $y$ only needs $K$, we can sample from $z|_{(y,K)=(v,J)}$ without knowing $b$. This conditional sampling is exactly the post-processing step that we claimed.

We now analyze the divergence between $\rho_0$ and $\rho_1$, the shuffling step implies that $\rho_0$ and $\rho_1$ are symmetric. This implies that the divergence between $\rho_0$ and $\rho_1$ is equal to the divergence between the distribution between the distribution of the counts of $0'$s and $1'$s.
The decomposition in equation (\ref{DecomposXi}) implies that the divergence between $\mathcal{A}_S(X_0)$ and $\mathcal{A}_S(X_1)$ can be dominated by the divergence of $\rho_0$ and $\rho_1$, where
$\Delta_2 \sim Bern(\delta_{n}), \Delta_0 \sim Bin(1-\Delta_2,\frac{e^{\epsilon_n}}{1+e^{\epsilon_n}})$,
$\Delta_1 = 1 - \Delta_0 - \Delta_2$
 and
 $MultiBern(\theta_1,\cdots, \theta_d)$ represents a $d$-dimensional Bernoulli distribution with $ \sum_{j=1}^d\theta_j = 1$.
\end{proof}

\subsubsection{Proof of Lemma 3}

\begin{proof}
Since  $T(N(\pmb{\mu}_0,\pmb{\Sigma}),N(\pmb{\mu}_1,\pmb{\Sigma}))$ is
$$
\Phi(\Phi^{-1}(1-\alpha)-\sqrt{(\pmb{\mu}_1-\pmb{\mu}_0)'\pmb{\Sigma}^{-1}(\pmb{\mu}_1-\pmb{\mu}_0)}),
$$ 
according to the property of normal distribution, the key is to calculate
$(\pmb{\mu}_1-\pmb{\mu}_0)'\pmb{\Sigma}^{-1} (\pmb{\mu}_1-\pmb{\mu}_0)$.
Let $v_1 = \sum_{i=1}^{n-1} p_i, v_2 = \sum_{i=1}^{n-1}p_i^2,$ then
$$\pmb{\Sigma} =  \left(
\begin{array}{cc}
\frac{v_1}{2}-\frac{v_2}{4} & -\frac{v_2}{4} \\
-\frac{v_2}{4} & \frac{v_1}{2}-\frac{v_2}{4} \\
\end{array}
\right),$$
and
$$\pmb{\Sigma}^{-1} =  \left(
\begin{array}{cc}
\frac{2v_1-v_2}{v_1^2-v_1v_2} & \frac{v_2}{v_1^2-v_1v_2} \\
\frac{v_2}{v_1^2-v_1v_2} & \frac{2v_1-v_2}{v_1^2-v_1v_2} \\
\end{array}
\right).$$
By simple calculation, we can obtain that
$$
(\pmb{\mu}_1-\pmb{\mu}_0)'\pmb{\Sigma}^{-1} (\pmb{\mu}_1-\pmb{\mu}_0)= (-1,1) \pmb{\Sigma}^{-1}(-1,1)' = \frac{4}{\sum_{i=1}^{n-1}p_i}.
$$
Substituting $\mu= \sqrt{\frac{4}{\sum_{i=1}^{n-1}p_i}}$ yields the proof.
\end{proof}
\section*{Acknowledgements}
We would like to thank all the anonymous reviewers for generously dedicating their time and expertise to evaluate our manuscript with insightful comments. 
This work is partially support by JST CREST JPMJCR21M2, JSPS KAKENHI Grant Number JP22H00521, JP22H03595, JP21K19767, JST/NSF Joint Research SICORP JPMJSC2107.
\section*{References}
\nobibliography*
\bibentry{abadi2016deep}.\\[.2em]
\bibentry{balle2019privacy}.\\[.2em]
\bibentry{berry1941accuracy}.\\[.2em]
\bibentry{Bittau2017prochlo}.\\[.2em]
\bibentry{cheu2019distributed}. \\[.2em]
\bibentry{Dong2022gaussian}.\\[.2em]
\bibentry{Dwork2014algorithmic}.\\[.2em]
\bibentry{Erlingsson2019amplification}.\\[.2em]
\bibentry{Esseen1942}.\\[.2em]
\bibentry{Feldman2022hiding}.\\[.2em]
\bibentry{GDDSK21federated}. \\[.2em]
\bibentry{Girgis2021renyi}.\\[.2em]
\bibentry{JZT2015}.\\[.2em]
\bibentry{Kairouz2015composition}.\\[.2em]
\bibentry{KS11}.\\[.2em]
\bibentry{lecun1998gradient}. \\[.2em]
\bibentry{liu2021flame}. \\[.2em]
\bibentry{Liu2023} \\[.2em]
\bibentry{NCW21}. 

\nobibliography{aaai22}


\begin{thebibliography}{20}
\providecommand{\natexlab}[1]{#1}

\bibitem[{Abadi et~al.(2016)Abadi, Chu, Goodfellow, McMahan, Mironov, Talwar,
  and Zhang}]{abadi2016deep}
Abadi, M.; Chu, A.; Goodfellow, I.; McMahan, H.~B.; Mironov, I.; Talwar, K.;
  and Zhang, L. 2016.
\newblock Deep learning with differential privacy.
\newblock In \emph{Proceedings of the 2016 ACM SIGSAC conference on computer
  and communications security}, 308--318.

\bibitem[{Acquisti and Grossklags(2005)}]{Acquisti2005privacy}
Acquisti, A.; and Grossklags, J. 2005.
\newblock Privacy and rationality in individual decision making.
\newblock \emph{IEEE security \&amp; privacy}, 3(1): 26--33.

\bibitem[{Balle et~al.(2019)Balle, Bell, Gasc{\'o}n, and
  Nissim}]{balle2019privacy}
Balle, B.; Bell, J.; Gasc{\'o}n, A.; and Nissim, K. 2019.
\newblock The privacy blanket of the shuffle model.
\newblock In \emph{Advances in Cryptology--CRYPTO 2019: 39th Annual
  International Cryptology Conference, Santa Barbara, CA, USA, August 18--22,
  2019, Proceedings, Part II 39}, 638--667. Springer.

\bibitem[{Berry(1941)}]{berry1941accuracy}
Berry, A.~C. 1941.
\newblock The accuracy of the Gaussian approximation to the sum of independent
  variates.
\newblock \emph{Transactions of the american mathematical society}, 49(1):
  122--136.

\bibitem[{Bittau et~al.(2017)Bittau, Erlingsson, Maniatis, Mironov,
  Raghunathan, Lie, Rudominer, Kode, Tinnes, and Seefeld}]{Bittau2017prochlo}
Bittau, A.; Erlingsson, {\'U}.; Maniatis, P.; Mironov, I.; Raghunathan, A.;
  Lie, D.; Rudominer, M.; Kode, U.; Tinnes, J.; and Seefeld, B. 2017.
\newblock Prochlo: Strong privacy for analytics in the crowd.
\newblock In \emph{Proceedings of the 26th symposium on operating systems
  principles}, 441--459.

\bibitem[{Cheu et~al.(2019)Cheu, Smith, Ullman, Zeber, and
  Zhilyaev}]{cheu2019distributed}
Cheu, A.; Smith, A.; Ullman, J.; Zeber, D.; and Zhilyaev, M. 2019.
\newblock Distributed differential privacy via shuffling.
\newblock In \emph{Advances in Cryptology--EUROCRYPT 2019: 38th Annual
  International Conference on the Theory and Applications of Cryptographic
  Techniques, Darmstadt, Germany, May 19--23, 2019, Proceedings, Part I 38},
  375--403. Springer.

\bibitem[{Dong, Roth, and Su(2022)}]{Dong2022gaussian}
Dong, J.; Roth, A.; and Su, W.~J. 2022.
\newblock Gaussian differential privacy.
\newblock \emph{Journal of the Royal Statistical Society Series B: Statistical
  Methodology}, 84(1): 3--37.

\bibitem[{Dwork and Roth(2014)}]{Dwork2014algorithmic}
Dwork, C.; and Roth, A. 2014.
\newblock The algorithmic foundations of differential privacy.
\newblock \emph{Foundations and Trends{\textregistered} in Theoretical Computer
  Science}, 9(3--4): 211--407.

\bibitem[{Erlingsson et~al.(2019)Erlingsson, Feldman, Mironov, Raghunathan,
  Talwar, and Thakurta}]{Erlingsson2019amplification}
Erlingsson, {\'U}.; Feldman, V.; Mironov, I.; Raghunathan, A.; Talwar, K.; and
  Thakurta, A. 2019.
\newblock Amplification by shuffling: From local to central differential
  privacy via anonymity.
\newblock In \emph{Proceedings of the Thirtieth Annual ACM-SIAM Symposium on
  Discrete Algorithms}, 2468--2479. SIAM.

\bibitem[{Esseen(1942)}]{Esseen1942}
Esseen, C.~G. 1942.
\newblock On the Liapunov limit error in the theory of probability.
\newblock \emph{Ark. Mat. Astr. Fys.}, 28: 1--19.

\bibitem[{Feldman, McMillan, and Talwar(2022)}]{Feldman2022hiding}
Feldman, V.; McMillan, A.; and Talwar, K. 2022.
\newblock Hiding among the clones: A simple and nearly optimal analysis of
  privacy amplification by shuffling.
\newblock In \emph{2021 IEEE 62nd Annual Symposium on Foundations of Computer
  Science (FOCS)}, 954--964. IEEE.

\bibitem[{Girgis et~al.(2021{\natexlab{a}})Girgis, Data, Diggavi, Kairouz, and
  Suresh}]{GDDSK21federated}
Girgis, A.~M.; Data, D.; Diggavi, S.; Kairouz, P.; and Suresh, A.~T.
  2021{\natexlab{a}}.
\newblock Shuffled Model of Federated Learning: Privacy, Accuracy and
  Communication Trade-Offs.
\newblock \emph{IEEE Journal on Selected Areas in Information Theory}, 2(1):
  464--478.

\bibitem[{Girgis et~al.(2021{\natexlab{b}})Girgis, Data, Diggavi, Suresh, and
  Kairouz}]{Girgis2021renyi}
Girgis, A.~M.; Data, D.; Diggavi, S.; Suresh, A.~T.; and Kairouz, P.
  2021{\natexlab{b}}.
\newblock On the Renyi differential privacy of the shuffle model.
\newblock In \emph{Proceedings of the 2021 ACM SIGSAC Conference on Computer
  and Communications Security}, 2321--2341.

\bibitem[{Jorgensen, Yu, and Cormode(2015)}]{JZT2015}
Jorgensen, Z.; Yu, T.; and Cormode, G. 2015.
\newblock Conservative or liberal? Personalized differential privacy.
\newblock In \emph{2015 IEEE 31st International Conference on Data
  Engineering}, 1023--1034.

\bibitem[{Kairouz, Oh, and Viswanath(2015)}]{Kairouz2015composition}
Kairouz, P.; Oh, S.; and Viswanath, P. 2015.
\newblock The composition theorem for differential privacy.
\newblock In \emph{International conference on machine learning}, 1376--1385.
  PMLR.

\bibitem[{Kasiviswanathan et~al.(2011)Kasiviswanathan, Lee, Nissim,
  Raskhodnikova, and Smith}]{KS11}
Kasiviswanathan, S.~P.; Lee, H.~K.; Nissim, K.; Raskhodnikova, S.; and Smith,
  A. 2011.
\newblock What Can We Learn Privately?
\newblock \emph{SIAM Journal on Computing}, 40(3): 793--826.

\bibitem[{LeCun et~al.(1998)LeCun, Bottou, Bengio, and
  Haffner}]{lecun1998gradient}
LeCun, Y.; Bottou, L.; Bengio, Y.; and Haffner, P. 1998.
\newblock Gradient-based learning applied to document recognition.
\newblock \emph{Proceedings of the IEEE}, 86(11): 2278--2324.

\bibitem[{Liu et~al.(2021)Liu, Cao, Chen, Guo, and Yoshikawa}]{liu2021flame}
Liu, R.; Cao, Y.; Chen, H.; Guo, R.; and Yoshikawa, M. 2021.
\newblock Flame: Differentially private federated learning in the shuffle
  model.
\newblock In \emph{Proceedings of the AAAI Conference on Artificial
  Intelligence}, volume~35, 8688--8696.

\bibitem[{Liu et~al.(2023)Liu, Zhao, Li, Liu, and Chen}]{Liu2023}
Liu, Y.; Zhao, S.; Li, X.; Liu, Y.; and Chen, H. 2023.
\newblock Echo: Privacy amplification for Personalized Private Federated
  Learning with Shuffle Model.
\newblock \emph{Proceedings of the AAAI Conference on Artificial Intelligence}.

\bibitem[{Niu et~al.(2021)Niu, Chen, Wang, Wang, Li, and Cao}]{NCW21}
Niu, B.; Chen, Y.; Wang, B.; Wang, Z.; Li, F.; and Cao, J. 2021.
\newblock AdaPDP: Adaptive Personalized Differential Privacy.
\newblock In \emph{IEEE INFOCOM 2021 - IEEE Conference on Computer
  Communications}, 1--10.

\end{thebibliography}
\end{document}